\newtheorem{theorem}{\textbf{Theorem}}
\newtheorem{definition}{\textbf{Definition}}
\newtheorem{lemma}{\textbf{Lemma}}
\newtheorem{remark}{\textbf{Remark}}
\begin{document}
\markboth{IEEE Trans. Veh. Tech.} {Li et al: Opt. HC-RAN\ldots}

\title{Energy-Efficient Joint Congestion Control and Resource Optimization in Heterogeneous Cloud Radio Access Networks}

\author{Jian~Li,~Mugen~Peng$^{\dagger}$,~\IEEEmembership{Senior
Member,~IEEE},~Yuling~Yu, and Zhiguo~Ding
\thanks{Jian~Li (e-mail: {\tt lijian.wspn@gmail.com}), Mugen~Peng (corresponding author, e-mail:
{\tt pmg@bupt.edu.cn}), ~Yuling~Yu (e-mail: {\tt
aliceyu1215@gmail.com}) are with the Key Laboratory of Universal
Wireless Communications for Ministry of Education, Beijing
University of Posts and Telecommunications, China.
Zhiguo~Ding (e-mail: {\tt
z.ding@lancaster.ac.uk}) is with the School of Computing and Communications, Lancaster University, LA1 4WA, UK.} }
\date{\today}

\maketitle

\begin{abstract}
The heterogeneous cloud radio access network (H-CRAN) is a promising
paradigm which integrates the advantages of cloud radio access
network (C-RAN) and heterogeneous network (HetNet). In this paper, we study the
joint congestion control and resource optimization to explore the
energy efficiency (EE)-guaranteed tradeoff between throughput
utility and delay performance in a downlink slotted H-CRAN. We
formulate the considered problem as a stochastic optimization
problem, which maximizes the utility of average throughput and
maintains the network stability subject to required EE constraint
and transmit power consumption constraints by traffic admission
control, user association, resource
block allocation and power allocation. Leveraging on the Lyapunov
optimization technique, the stochastic optimization problem can be
transformed and decomposed into three separate subproblems which can
be solved concurrently at each slot. The third mixed-integer
nonconvex subproblem is efficiently solved utilizing the continuity
relaxation of binary variables and the Lagrange dual decomposition
method. Theoretical analysis shows that the proposal can
quantitatively control the throughput-delay performance tradeoff
with required EE performance. Simulation results consolidate the
theoretical analysis and demonstrate the advantages of the proposal
from the prospective of queue stability and power consumption.
\end{abstract}

\begin{IEEEkeywords}
Heterogeneous cloud radio access networks (H-CRANs), energy
efficiency (EE), congestion control, resource optimization, Lyapunov
optimization.
\end{IEEEkeywords}

\section{Introduction}

Recently, the mobile operators are facing the continuously growing
demand for ubiquitous high-speed wireless access and the explosive
proliferation of smart phones. Justified by the urgent trend and
projecting the demand a decade ahead, a so-called 100 times spectral
efficiency (SE) boost and 1000 times energy efficiency (EE)
improvement compared to the current fourth generation (4G) wireless
systems are required\cite{report}. The increasingly demands make it
more challenging for operators to manage and operate wireless
networks and provide required quality of service (QoS) efficiently.
Therefore, the fifth generation (5G) wireless networks are expected
to fulfill these goals by putting forward new wireless network
architectures, advanced signal processing, and networking
technologies\cite{5G1,5G2}.

{By leveraging cloud computing technologies, the cloud radio access
network (C-RAN) has emerged as a promising solution for providing
good performance in terms of both SE and EE across software defined
wireless communication networks\cite{RRHcmcc}. In C-RANs, the remote
radio heads (RRHs) configured only with some front radio frequency
(RF) functionalities are connected to the baseband unit (BBU) pool
through fronthaul links (e.g., optical fibers) to enable cloud
computing-based large-scale cooperative signal processing. However,
the constrained fronthaul link between the RRH and the BBU pool
presents a performance bottleneck to large-scale cooperation gains.
Furthermore, real-time voice service and control signalling are not
efficiently supported in C-RANs. Therefore, the traditional C-RAN
must be enhanced and even evolved.

Motivated by solving the aforementioned challenges, the
heterogeneous cloud radio access network (H-CRAN) has been proposed
to combine the advantages of both C-RANs and heterogeneous networks
(HetNets) in our prior works \cite{HCRAN}\cite{HCRAN1}. As shown in
Fig. 1, the high power nodes (HPNs) are configured with the entire
communication functionalities from physical to network layers, and
the delivery of control and broadcast signalling is shifted from
RRHs to HPNs, which alleviates the capacity and time delay
constraints on the fronthaul. The BBU pool is interfaced to the HPN
for the inter-tier interference coordination. H-CRANs decouple the
control and user planes, and support the adaptive signaling/control
mechanism between connection-oriented and connectionless modes,
which can achieve significant overhead savings in the radio
connection/release. For a time-varying H-CRAN that adopts orthogonal
frequency division multiple access (OFDMA), besides of power and
resource block (RB) allocation, the traffic admission, the user
association are also critical for improving key performances. }

\begin{figure}
\centering  \vspace*{0pt}
\includegraphics[scale=0.55]{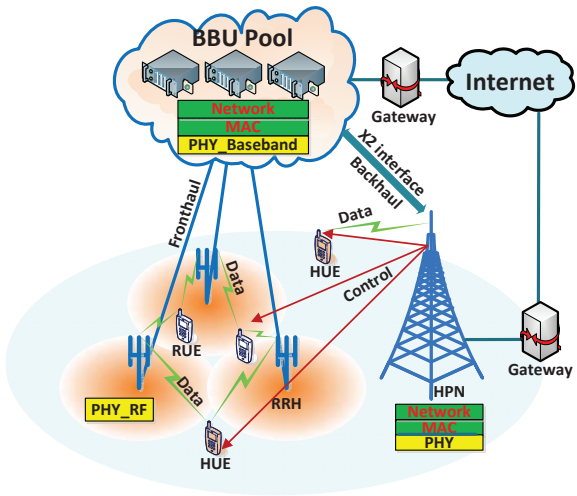}
\setlength{\belowcaptionskip}{-100pt} \vspace*{-2pt}\caption{The
heterogeneous cloud radio access networks (H-CRANs)}
\label{fig_HCRAN}
\end{figure}

\subsection{Related Works}
{

The resource optimization is significantly important to highlight
the great potentials of H-CRANs. The EE performance metric has
become a new design goal due to the sharp increase of the carbon
emission and operating cost of wireless communication
systems\cite{DFeng}. The tradeoff relationship between EE and SE has
attracted growing interests in the design of energy-efficient radio
resource optimization algorithms for wireless communication
systems\cite{XHong}\cite{OOnireti}. Power allocation was studied
in\cite{CHe} to address the EE-SE tradeoff in the downlink
multi-user distributed antenna system (DAS) with proportional rate
constraints. The authors of\cite{KCheung,WJing,CXiong} jointly
consider multi-dimensional resource optimization, such as
beamforming optimization, power allocation, and RB assignment, to
explore the EE-SE tradeoff in OFDMA networks. The EE-SE tradeoff
also got deep investigation in Device-to-Device (D2D) communications
and relay-aided cellular networks\cite{SHuang,IKu,ZZhou}.

However, the aforementioned literatures are typically based on the
full buffer assumptions and snapshot-based models. This indicates
that the stochastic and time-varying features of traffic arrivals
are not considered into the formulations. Therefore, only the
physical layer performance metrics such as SE and EE are optimized
and the resulting control policy is only adaptive to channel state
information (CSI). In practice, delay is also a key metric to
measure the QoS, which is also neglected in these literatures.

Contrary to the static models used in the EE-SE tradeoff, the
power-delay tradeoff is usually investigated from the long-term
average perspective in a time-varying system. The authors
of\cite{VKNLau} and\cite{YCui} aimed to dynamically optimize the
power and subband allocations to achieve the Pareto optimal tradeoff
between power consumption and average delay in OFDMA systems.
In\cite{HKChung}, a theoretical framework was presented to analyze
power-delay tradeoff in a time-varying OFDMA system with imperfect
CSIT. Adaptive antenna selection and power allocation were exploited
in\cite{YLiMSheng} to compromise the power consumption and average
delay in downlink distributed antenna systems. By devising
delay-aware beamforming algorithm, \cite{FZhang} studied the
power-delay tradeoff in multi-user MIMO systems.

As a common feature, \cite{VKNLau,YCui,HKChung,YLiMSheng,FZhang}
assumed that the random traffic arrival rate is inside the network
capacity region. This indicates that admission control is
unnecessary. Moreover, throughput was not formulated into the
problem, thus the results for power-delay tradeoff can hardly give
insights into energy-efficient resource optimization problems. }

\subsection{Main Contributions}

{In this paper, considering the traffic admission control, the
throughput, as in\cite{neelyinfocom} and\cite{HjuTWC}, is defined as
the maximum amount of admissible traffic that H-CRANs can stably
carry, and therefore it to some extent reflects SE. Based on this,
we try to incorporate throughput, delay, and EE into a theoretical
framework, and effectively balance throughput and delay when certain
EE requirement is guaranteed for any traffic arrival rate in slotted
H-CRANs. The major contributions of this paper are twofold.

\begin{itemize}
\item The congestion control is incorporated into the radio resource
optimization model for slotted H-CRANs without prior-knowledge of
random traffic arrival rates and channel statistics. The decomposed
subproblems can be solved concurrently at each slot with the online
observation of traffic queues and virtual queues, which have been
determined by the joint optimization results at the previous slot.
Simulations demonstrate the advantages of the proposal from the
prospective of queue stability and power savings.
\item Using the framework of Lyapunov optimization, we put forward a
formulation to quantitatively strike a balance between average
throughput and average delay, meanwhile guarantee the required EE
performance of H-CRANs. Only by adjusting a control parameter, the
proposal provides a controllable method to balance the
throughput-delay performance on demand, which in turn adaptively
affects admission control and resource allocation.
\end{itemize}
}

The rest of this paper is organized as follows. Section II will
describe the system model and formulate the stochastic optimization
problem. Based on the framework of Lyapunov optimization, the
stochastic optimization problems of traffic admission control,
user association, RB and power allocation will be transformed and
decomposed in Section III. The challenging subproblem for user
association, RB and power allocation will be solved in Section IV.
The performance bounds of proposal will be analyzed in section V.
Numerical simulations will be shown in Section VI. Finally, Section
VII will summarize this paper.

\section{System Model and Problem Formulation}

In this section, we begin with describing the physical layer model,
followed by introducing the queue dynamics and the queue stability.
We then formally formulate the stochastic optimization problem. {For
convenience, the notations used are listed in Table I.}

\begin{table}[ht]
\renewcommand\arraystretch{0.6}
\centering \caption{{Summary of Notations}} {
\begin{tabular}{|m{1.0cm}|m{6.5cm}|}
\hline
\textbf{Notation} & \textbf{Description}  \\
\hline $\mathcal {R}$ & Set of RRHs \\
\hline $\mathcal {U} _H$ & Set of HUEs \\
\hline $\mathcal {U} _R$ & Set of RUEs \\
\hline $\mathcal {K} _H$ & Set of RBs used by RRHs \\
\hline $\mathcal {K} _R$ & Set of RBs used by the HPN \\
\hline $s_m(t)$ & Association indicator for the HUE $m$ at the time slot $t$\\
\hline $g_{ijk}(t)$ & CSI on the RB $k$ from the RRH $i$ to the RUE $j$ at the slot $t$\\
\hline $g_{imk}(t)$ & CSI on the RB $k$ from the RRH $i$ to the HUE $m$ at the slot $t$ \\
\hline $g_{ml}(t)$ & CSI on the RB $l$ from the HPN to the HUE $m$ at the slot $t$ \\
\hline ${c_l^H}(t)$ & RB usage indicator for the RB $l$ of the HPN at the slot $t$\\
\hline ${c_k^R}(t)$ & RB usage indicator for the RB $k$ of RRHs at the slot $t$\\
\hline $p_{ijk}(t)$ & Allocated power for the RUE $j$ occupying the RB $k$ of the RRH $i$ at the slot $t$ \\
\hline $p_{imk}(t)$ & Allocated power for the HUE $m$ occupying the RB $k$ of the RRH $i$ at the slot $t$ \\
\hline $p_{ml}(t)$ &  Allocated power for the HUE $m$ occupying the RB $l$ of the HPN at the slot $t$ \\
\hline ${a_{jk}}(t)$ & Allocation of the RB $k$ of RRHs to the RUE $j$ at the slot $t$\\
\hline ${a_{mk}}(t)$ & Allocation of the RB $k$ of RRHs to the HUE $m$ at the slot $t$ \\
\hline ${b_{ml}}(t)$ & Allocation of the RB $l$ of the HPN to the HUE $m$ at the slot $t$ \\
\hline ${\mu _m}(t)$ & Transmit rate of the HUE $m$ at the slot $t$ \\
\hline ${\mu _j}(t)$ & Transmit rate of the RUE $j$ at the slot $t$ \\
\hline $R_m(t)$ & The amount of admitted traffics for the HUE $m$ at the slot $t$  \\
\hline $R_j(t)$ & The amount of admitted traffics for the RUE $j$ at the slot $t$  \\
\hline $Q_m(t)$ & Traffic buffering queue length for the HUE $m$ at the slot $t$ \\
\hline $Q_j(t)$ & Traffic buffering queue length for the RUE $j$ at the slot $t$ \\
\hline $\gamma _{m}(t)$ & Auxiliary variable for the throughput of the HUE $m$ at the slot $t$ \\
\hline $\gamma _{j}(t)$ & Auxiliary variable for the throughput of the RUE $j$ at the slot $t$ \\
\hline $H_m(t)$ & Virtual queue length for the HUE $m$ with arrival $\gamma _{m}$ at the slot $t$ \\
\hline $H_j(t)$ & Virtual queue length for the RUE $j$ with arrival $\gamma _{j}$ at the slot $t$  \\
\hline ${Z}(t)$ & Virtual queue length for the average EE constraint at the slot $t$\\
\hline $x_{mk}$ & Continuous auxiliary variable for RB allocation ${a_{mk}}$\\
\hline $y_{ml}$ & Continuous auxiliary variable for RB allocation ${b_{ml}}$\\
\hline $w_{ijk}$ & Auxiliary variable for power allocation $p_{ijk}$ \\
\hline $v_{imk}$ & Auxiliary variable for power allocation $p_{imk}$ \\
\hline $u_{ml}$ & Auxiliary variable for power allocation $p_{ml}$ \\
\hline
\end{tabular}}
\end{table}

\subsection{Physical Layer Model}

The downlink transmission in an OFDMA-based H-CRAN is considered, in
which one HPN and $N$ RRHs are consisted. Since the HPN is mainly
used to deliver the control signalling and guarantee the basic
coverage for certain area, the UEs with low traffic arrival rates
are more likely served by HPN and they are labeled as HUEs.
Meanwhile, since the RRHs are efficient to provide high bit rates,
the user equipments (UEs) with high traffic arrival rates will be served by the RRHs.
Let $\mathcal {R} = \{1, 2, ..., N\}$ denote the set of RRHs, let
$\mathcal {U} _H$ denote the set of HUEs and let $\mathcal {U} _R$
denote the set of RUEs. To completely avoid the severe inter-tier
interferences, the RBs of H-CRAN are partitioned and assigned
respectively to the RRH and the HPN tiers. Let $\mathcal {K}_R$ and
$\mathcal {K}_H$ denote the set of RBs used by RRH tier and HPN
tier, respectively. Let $W$ and $W_0$ denote the system bandwidth
and the bandwidth of each RB, respectively. Any UE that is
associated with RRH tier receives signal simultaneously from
multiple cooperative RRHs on allocated RBs, and the RBs allocated to
different UEs are orthogonal, it is thus inter-RRH interference-free
among UEs. The network is assumed to operate in slotted time with
slot duration $\tau$ and indexed by $t$.

The HUEs can be associated with RRH tier to get more transmission
opportunity when the traffic load of HPN becomes heavier, while the
RUEs are served only by RRHs, which is usually in accordance with
the practice. The user association strategy plays an important role
in improving the utilization efficiencies of limited radio resources
in an H-CRAN. Let the binary variable $s_m(t)$ indicate the user
association of the HUE $m$ at the slot $t$, which is 1 when the HUE
$m$ is associated with the RRH tier or 0 when it is associated with
the HPN. Let $g_{ijk}(t)$, $g_{imk}(t)$ and $g_{ml}(t)$ represent
the CSIs on the RB $k$ from the RRH $i$ to the RUE $j$, the RB $k$
from the RRH $i$ to the HUE $m$, the RB $l$ from the HPN to the HUE
$m$, respectively. {Note that these CSIs account for the antenna
gain, beamforming gain, path loss, shadow fading, fast fading, and
noise together. In this paper, as the RB allocation for both RRHs
and HPN tiers in OFDMA-based H-CRANs are focused, the antenna
configuration and beamforming design are not specified, and neither
of them affects the general formulation.} The CSIs are assumed to be
independently and identically distributed (i.i.d.) over slots, and
takes values in a finite state place. Let $p_{ijk}(t)$ denote the
allocated transmit power for RUE $j$ on RB $k$ from RRH $i$ at slot
$t$, let $p_{imk}(t)$ denote the allocated transmit power for HUE
$m$ on RB $k$ from RRH $i$ if HUE $m$ is associated with RRH tier at
slot $t$, and let $p_{ml}(t)$ denote the allocated transmit power
for HUE $m$ on RB $l$ from HPN if HUE $m$ is associated with HPN at
slot $t$. Furthermore, let the binary variable ${a_{jk}}(t)$ and
${a_{mk}}(t)$ indicate the allocation of RB $k$ of RRH tier to RUE
$j$ and HUE $m$ at slot $t$, respectively, and let ${b_{ml}}(t)$
indicate the allocation of RB $l$ of HPN to HUE $m$ at slot $t$,
then we have the following non-reuse constraints
\begin{equation}
{c_k^R}(t) = \sum\limits_{j \in {\mathcal {U}_R}} {{a_{jk}}(t)} +
\sum\limits_{m \in {\mathcal {U}_{H}}} {{s_m(t)}{a_{mk}}(t)} \le
1,\label{reuse}
\end{equation}
\begin{equation}
c_l^H(t) = \sum\limits_{m \in \mathcal {U} _H}
{(1-s_m(t)){b_{ml}}(t)} \le 1.
\end{equation}
for the RRH tier and HPN tier, respectively.

For the UEs that are served by RRHs (including all the RUEs and some
associated HUEs), the maximum ratio combining (MRC) is assumed to be
adopted. Therefore, the transmit rate of RUE $j$ and HUE $m$ at slot
$t$ is given by
\begin{equation}
{\mu _j}(t) = \sum\limits_{k \in \mathcal {K}_R}
{{a_{jk}(t)}{W_0}{{\log }_2}(1 + \sum\limits_{i \in \mathcal {R}}
{{p_{ijk}(t)}{g_{ijk}(t)}} )},
\end{equation}
\begin{equation}
\begin{array}{l}
{\mu _m}(t) = (1-{s_m}(t))\!\!\sum\limits_{l \in \mathcal {K}_H}
\!\!\!{{b_{ml}}(t){W_0}{{\log }_2}(1 + {g_{ml}}(t){p_{ml}}(t))} \\
~~~~~~~+ {s_m}(t)\!\!\sum\limits_{k \in \mathcal {K}_R}\!\!\!
{{a_{mk}}(t){W_0}{{\log }_2}(1 + \sum\limits_{i \in \mathcal {R}}
{{p_{imk}}(t){g_{imk}}(t)} )}, \end{array}
\end{equation}
respectively. Accordingly, the total transmit rate of the network is
given by
\begin{equation}
{\mu _{\rm sum}}(t) = \sum\limits_{m \in {\mathcal {U}_H}} {{\mu
_m}(t)} + \sum\limits_{j \in {\mathcal {U}_R}} {{\mu _j}(t)},
\end{equation}

With the resource allocations, the transmit power of RRH $i$ and HPN
is given by
\begin{equation}
{p_i}(t)\! =\!\! \sum\limits_{j \in {\mathcal {U}_R}}\!
{\sum\limits_{k \in \mathcal {K}_R}\!\! {{a_{jk}}(t){p_{ijk}}(t)} }
+\!\! \sum\limits_{m \in {\mathcal {U}_H}} \!{\sum\limits_{k \in
\mathcal {K}_R}\!\! { {s_m}(t){a_{mk}}(t){p_{imk}}(t)} },
\end{equation}
\begin{equation}
{p_H}(t) = \sum\limits_{m \in {\mathcal {U}_H}} {\sum\limits_{l \in
\mathcal {K}_H} {(1-s_m(t))}{{b_{ml}}(t){p_{ml}}(t)} },
\end{equation}
respectively. Accordingly, the total power consumption of the
network is given by
\begin{equation}
{p_{\rm sum}}(t) = \sum\limits_{i \in \mathcal {R}} {\varphi _{\rm
eff}^R{p_i}(t)} + p_{\rm c}^{R} + \varphi _{\rm eff}^Hp_H(t) +
p_{\rm c}^H,
\end{equation}
where $\varphi _{\rm eff}^R$ and $\varphi _{\rm eff}^H$ are the
drain efficiency of RRH and HPN, respectively, $p_{\rm c}^{\rm{R}}$
and $p_{\rm c}^{\rm{H}}$ are the static power consumption of RRHs
and HPN, respectively, including the circuit power, the fronthaul
power consumption and the backhaul power consumption.

\subsection{Queue Dynamics and Queue Stability}

In the considered H-CRAN, separate buffering queues are maintained
for each UE. Let ${Q_{m}}(t)$ and ${Q_{j}}(t)$ denote the length of
buffering queues maintained for HUE $m$ and RUE $j$, respectively.
Let ${A_{m}}(t)$ and ${A_{j}}(t)$ denote the amount of random
traffic arrivals at slot $t$ destined for HUE $m \in \mathcal {U}_H$
and RUE $j \in \mathcal {U}_R$, respectively.

{\emph{\textbf{Assumption} 1 (Random Traffic Arrivals Model)}:
Assume that $A_m(t)$ and $A_{j}(t)$ are i.i.d. over time slots
according to a general distribution, which are independent w.r.t.
$m$ and $j$. Furthermore, there exists certain peak amount of
traffic arrivals $A_{m}^{\max}$ and $A_{j}^{\max}$, respectively,
which satisfying $A_m(t) \le A_m^{\max}$ and $A_{j}(t) \le
A_{j}^{\max}$.}

In practice, the statistics of $A_m(t)$ and $A_{j}(t)$ are usually
unknown to H-CRANs, and the achievable capacity region is usually
difficult to estimate, the situation that the exogenous arrival
rates are outside of the network capacity region may occur. In this
situation, the traffic queues cannot be stabilized without a
transport layer flow control mechanism to limit the amount of data
that is admitted. To this end, the H-CRAN tries to maximize its
utility by admitting as many traffic datas as possible, and to
minimize the penalty from traffic congestion by transmitting as many
traffic datas as possible with the limited radio resources. Let
$R_m(t)$ and $R_{j}(t)$ denote the amount of admitted traffic datas
out of the potentially substantial traffic arrivals for HUE $m$ and
RUE $j$, respectively. Therefore, the traffic buffering queues for
HUE $m$ and RUE $j$ evolve as
\begin{equation}
{Q_{m}}(t + 1) = \max\{{Q_{m}}(t) -{\mu _{m}}(t)\tau, 0 \}+
{R_{m}}(t),\label{dynamicsQ0m}
\end{equation}
\begin{equation}
{Q_{j}}(t + 1) = \max\{{Q_{j}}(t) - {\mu _{j}}(t)\tau, 0\} +
{R_{j}}(t),\label{dynamicsQij}
\end{equation}
respectively, where we have $0 \le R_m(t) \le A_m(t)$ and $0 \le
R_{j}(t) \le A_{j}(t)$ at each slot.

To model the impacts of joint congestion control and resource
allocation on the average delay and the achieved throughput utility,
the definition of network stability will be formally given.

\begin{definition}
\emph{A single discrete time queue $Q(t)$ is strongly stable if
\begin{equation}
\mathop {\lim \sup }\limits_{T \to \infty }
\frac{1}{T}\sum\limits_{t = 0}^{T - 1} {\mathbb{E}[Q(t)]} < \infty
\end{equation}}
\end{definition}
\begin{definition}
\emph{A network of queues is strongly stable if all the individual
queues of the network are strongly stable.}
\end{definition}

To guarantee certain EE requirement when dynamically
making congestion control and resource optimization for the H-CRAN.
As in{\cite{yzli0}}, the definition of EE is given as follows.

\begin{definition}
\emph{The EE of considered H-CRAN is defined as the ratio of the
long-term time averaged total transmit rate to the corresponding
long-term time averaged total power consumption in the unit of
bits/Hz/J, which is given by
\begin{equation}
{\eta _{EE}} = \frac{{\mathop {\lim }\limits_{T \to \infty }
\frac{1}{T}\sum\limits_{t = 0}^{T - 1} {\mathbb{E}[{\mu _{sum}}(t)]}
}}{W{\mathop {\lim }\limits_{T \to \infty }
\frac{1}{T}\sum\limits_{t = 0}^{T - 1} {\mathbb{E}[{p_{sum}}(t)]} }}
= \frac{{{{\bar \mu }_{sum}}}}{{{W}{{\bar p}_{sum}}}},
\end{equation}}
\end{definition}

A queue is strongly stable if it has a bounded time average queue
backlog. According to the Little's Theorem{\cite{yingcui}}, the
average delay is proportional to the average queue length for a
given traffic arrival rate. Furthermore, when a network of traffic
queues is strongly stable, the average achieved throughput can be
given by the time averaged amount of admitted exogenous traffic
arrivals. Therefore, the average throughput for HUE and RUE is
expressed as
\begin{equation}
{{\bar r}_m} = \mathop {\lim }\limits_{T \to \infty }
\frac{1}{T}\sum\limits_{t = 0}^{T - 1} {{R_m}(t)},
\end{equation}
\begin{equation}
{{\bar r}_{j}} = \mathop {\lim }\limits_{T \to \infty }
\frac{1}{T}\sum\limits_{t = 0}^{T - 1} {{R_{j}}(t)},
\end{equation}
respectively.

\subsection{Problem Formulation}

The profit brought by dynamic joint congestion control and resource
optimization can be characterized by the utility of average
throughput, which is given by
\begin{equation}
U({\bar {\bf{r}}}) = \alpha {\sum\limits_{j \in {\mathcal {U}_R}}
{{g_R}({{\bar r_{j}}} )} } + \beta \sum\limits_{m \in \mathcal
{U}_H} {{g_H}({{\bar r_{m}}} )},
\end{equation}
where $\bar {\bf{r}} =[\bar r_m, \bar r_{j} : m \in \mathcal {U}_H,
j \in \mathcal {U}_R]$ is the vector of average throughput for all
UEs, ${g_R}(.)$ and ${g_H(.)}$ are the non-decreasing concave
utility function for RUEs and HUEs, respectively, $\alpha $ and
$\beta $ are the positive utility prices which indicate the relative
importance of corresponding utility functions.

Let ${\bf{r}} = [ {R_{j}(t)},{R_{m}(t)} : j \in \mathcal {U}_R, m
\in \mathcal {U}_H]$, ${\bf{s}}(t) = [s_m(t):m \in \mathcal {U}_H]$,
${\bf{p}} = [ {p_{ijk}(t)}, p_{imk}(t), p_{ml}(t): j \in \mathcal
{U}_R, m \in \mathcal {U}_H, i \in \mathcal {R}, k \in \mathcal
{K}_R, l \in \mathcal {K}_H]$, ${{\bf{a}}} = [{a_{jk}(t)},
{a_{mk}(t)}, b_{ml}(t) : j \in \mathcal {U}_R, m \in \mathcal {U}_H,
k \in \mathcal {K}_R, l \in \mathcal {K}_H]$ denote the vectors of
traffic admission, user association, power allocation and RB
allocation, respectively. To maximize the throughput utility of
networks and ensure the strong stability of traffic queues at the
same time by joint congestion control and resource optimization, the
stochastic optimization problem can be formulated as follows:
\begin{equation}
\begin{array}{l}
\mathop {\max }\limits_{\{ {\bf{r}},{\bf{s}},{\bf{p}},{{\bf{a}}}\} } U(\bar {\bf{r}}) \\
{\rm{s.t.}}~{\rm{C1}}: c_{k}^R(t) \le 1,\forall k,t,\\
~~~~~{\rm{C2}}: c_{l}^H(t) \le 1, \forall l,t,\\
~~~~~{\rm{C3}}: p_i(t) \le p_i^{\max},\forall i,t,\\
~~~~~{\rm{C4}}: p_H(t) \le p_H^{\max},\forall t,\\
~~~~~{\rm{C5}}:{\eta _{EE}} \ge \eta _{_{EE}}^{\rm req},\\
~~~~~{\rm{C6}}:Q_m(t)~{\rm{and}}~Q_{j}(t)~{\rm{are~strongly~stable}},\forall m,j,\\
~~~~~{\rm{C7}}: {R_{m}}(t) \le A_{m}(t), {R_{j}}(t) \le A_{j}(t),\forall m,j,t,\\
~~~~~{\rm{C8}}: {a_{jk}}(t), {a_{mk}}(t), {b_{ml}}(t), s_m(t) \in \{
0,1\},\forall j,k,m,l,t.
\end{array}\label{StochPro}
\end{equation}
where $p_i^{\max}$ and $p_H^{\max}$ denote the maximum transmit
power consumption of RRH $i$ and HPN, respectively, and $\eta_{\rm
EE}^{\rm req}$ denote the required EE of the network. C1 and C2
ensure that each RB of both tiers cannot be allocated to more than
one UE. C3 and C4 restrict the instantaneous transmit power of each
RRH and HPN. C5 makes the EE performance above predefined level. C6
ensures the queue stability to guarantee a finite average delay for
each queue. C7 ensures that the amount of admitted traffics cannot
be more than that of arrivals, C8 is the binary constraint for the
RB allocation and the user association.

For realistic H-CRANs, on one hand, the bursty traffic arrivals are
time-varying and unpredictable, and the key parameters are hardly
captured, which makes it infeasible to obtain optimal solution in an
offline manner; on the other hand, the dense deployment of RRHs in
H-CRANs exacerbates the computational complexity of centralized
solution. Therefore, an online and low-complexity solution to make
decisions effectively on user association, RB and power
allocation will be designed in the following sections.

\section{Dynamic Optimization Utilizing Lyapunov Optimization}

In response to the challenges of problem (\ref{StochPro}), we take
advantage of Lyapunov optimization techniques{\cite{neelyinfocom}}
to design an online control framework, which is able to make all
three important control decisions concurrently, including traffic
admission control, user association, RB and power allocation.

\subsection{Equivalent Formulation via Virtual Queues}

The formulated dynamic resource optimization problem in
(\ref{StochPro}) involves maximizing a non-decreasing concave
function of average throughputs, which is a bottleneck for solution.
To address this issue, the non-negative auxiliary variables
$\gamma_{m}(t)$ and $\gamma_{j}(t)$ are introduced to transform
problem (\ref{StochPro}) into an equivalent optimization problem
with a time averaged utility function of instantaneous throughputs
instead of a utility function of average throughputs. Let
$\bm{\gamma} = [\gamma_{m}(t),\gamma_{j}(t) : m \in \mathcal {U}_H,
j \in \mathcal {U}_R]$ be the vector of introduced auxiliary
variables, then we have the following equivalent problem:
\begin{equation}
\begin{array}{l}
\max\limits_{\{\bf{r},\bf{s},\bf{p},\bf{a}, \bm{\gamma}\}} \mathop
{\lim }\limits_{T \to \infty } \frac{1}{T}\sum\limits_{t = 0}^{T -
1} {U({\bm{\gamma}} (t))}\\
~~~~{\rm{s.t.}}~~~{\rm{C1-C8}},\\
~~~~~~~~~~~{\rm{C9}}:\gamma_{j}(t) \le A_{j}^{\max}, \gamma_{m}(t) \le A_{m}^{\max},\forall j,m,t,\\
~~~~~~~~~~~{\rm{C10}}:{{\bar \gamma }_{j}} \le {{\bar r}_{j}},
{{\bar \gamma }_{m}} \le {{\bar r}_{m}},\forall j,m.
\end{array}\label{EqvStochPro}
\end{equation}
where $U(\bm{\gamma}(t)) = \alpha \!{\sum\limits_{j \in {\mathcal
{U}_R}} \!{{g_R}({{\gamma _{j}}}(t))} } + \beta \!\!\!
\sum\limits_{m \in \mathcal {U}_H} \!\!{{g_H}({{\gamma _{m}}}
(t))}$, ${\bar \gamma _{j}} \!=\! \mathop {\lim }\limits_{T \to
\infty } \!\frac{1}{T}\!\sum\limits_{t = 0}^{T-1} \!{{\gamma
_{j}}(t)}$ and ${\bar \gamma _{m}}\!\! =\!\! \mathop {\lim
}\limits_{T \to \infty }\!\frac{1}{T}\!\sum\limits_{t = 0}^{T-1}
\!{{\gamma _{m}}(t)}$. Let ${\bf{r}}^{\rm opt}$, ${\bf{s}}^{\rm
opt}$, ${\bf{p}}^{\rm opt}$ and ${\bf{a}}^{\rm opt}$ denote the
optimal solution to the original problem (\ref{StochPro}), and let
$\bf{r}^*$, ${\bf{s}}^*$, ${\bf{p}}^*$, ${\bf{a}}^*$, and
$\bm{\gamma}^*$ denote the optimal solution to the equivalent
problem (\ref{EqvStochPro}), then we have the following theorem.
\begin{theorem}
\emph{The optimal solution for the transformed problem
(\ref{EqvStochPro}) can be directly turned into an optimal solution
for the original problem (\ref{StochPro}). Specifically, the optimal
solution for the original problem can be obtained as ${\bf{r}}^{\rm
opt} = {\bf{r}}^{*}$, ${\bf{s}}^{\rm opt} = {\bf{s}}^{*}$,
${\bf{p}}^{\rm opt} = {\bf{p}}^{*}$, and ${\bf{a}}^{\rm opt} =
{\bf{a}}^{*}$. }
\end{theorem}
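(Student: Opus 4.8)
The plan is to prove the equivalence by showing that problems (\ref{StochPro}) and (\ref{EqvStochPro}) attain the same optimal objective value, and that any optimizer of (\ref{EqvStochPro}), once its auxiliary components $\bm\gamma$ are discarded, is an optimizer of (\ref{StochPro}). Since both problems share the constraints C1--C8 and the same decision variables $\{\mathbf r,\mathbf s,\mathbf p,\mathbf a\}$, the only structural difference is the objective together with the extra variables $\bm\gamma$ and the constraints C9--C10. I would therefore establish two inequalities between the two optimal values and combine them.

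For the first direction (the optimal value of (\ref{EqvStochPro}) is at least that of (\ref{StochPro})), I would take any feasible point of (\ref{StochPro}) with induced average throughputs $\bar r_m,\bar r_j$ and lift it to a feasible point of (\ref{EqvStochPro}) by the constant assignment $\gamma_m(t)=\bar r_m$ and $\gamma_j(t)=\bar r_j$ for all $t$. This choice satisfies C9 because strong stability together with C7 forces $\bar r_m\le A_m^{\max}$ and $\bar r_j\le A_j^{\max}$, and it satisfies C10 with equality since $\bar\gamma_m=\bar r_m$ and $\bar\gamma_j=\bar r_j$. Because the assignment is constant in $t$, the time-averaged utility equals $U(\bar{\mathbf r})$, so the original objective is matched exactly; maximizing over all such lifts yields the claimed inequality.

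The reverse direction is where the real content lies. Given any feasible point of (\ref{EqvStochPro}), the tuple $\{\mathbf r,\mathbf s,\mathbf p,\mathbf a\}$ is automatically feasible for (\ref{StochPro}), so it suffices to bound the transformed objective by $U(\bar{\mathbf r})$. The key step is to invoke Jensen's inequality for each concave utility component: for every RUE, $\frac1T\sum_{t} g_R(\gamma_j(t)) \le g_R\!\big(\tfrac1T\sum_t \gamma_j(t)\big)$, and likewise for HUEs, so that in the limit $\overline{g_R(\gamma_j)}\le g_R(\bar\gamma_j)$ and $\overline{g_H(\gamma_m)}\le g_H(\bar\gamma_m)$. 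Then C10 gives $\bar\gamma_j\le\bar r_j$ and $\bar\gamma_m\le\bar r_m$, and the monotonicity of $g_R,g_H$ upgrades these to $g_R(\bar\gamma_j)\le g_R(\bar r_j)$ and $g_H(\bar\gamma_m)\le g_H(\bar r_m)$. Weighting by $\alpha,\beta$ and summing over all UEs shows $\lim_{T\to\infty}\frac1T\sum_t U(\bm\gamma(t)) \le U(\bar{\mathbf r})$, which is the desired inequality.

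The main obstacle I anticipate is the careful handling of the limits: the quantities defining $\bar\gamma_j,\bar\gamma_m$ and the time-averaged utilities must be shown to exist, or the argument must instead be phrased with $\limsup$ and $\liminf$ so that the Jensen and monotonicity steps chain together correctly. Once both inequalities are in hand, the two optimal values coincide, and an optimizer $\{\mathbf r^*,\mathbf s^*,\mathbf p^*,\mathbf a^*,\bm\gamma^*\}$ of (\ref{EqvStochPro}) must render the reverse-direction chain tight; hence its restriction $\{\mathbf r^*,\mathbf s^*,\mathbf p^*,\mathbf a^*\}$ attains the optimal value of (\ref{StochPro}) and is therefore optimal, giving $\mathbf r^{\rm opt}=\mathbf r^*$, $\mathbf s^{\rm opt}=\mathbf s^*$, $\mathbf p^{\rm opt}=\mathbf p^*$, and $\mathbf a^{\rm opt}=\mathbf a^*$.
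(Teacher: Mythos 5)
Your proposal is correct and follows essentially the same route as the paper's own proof: a two-sided sandwich argument in which one direction lifts an optimizer of (\ref{StochPro}) to (\ref{EqvStochPro}) via the constant assignment $\gamma_m(t)=\bar r_m^*$, $\gamma_j(t)=\bar r_j^*$, and the other direction chains Jensen's inequality on the concave utilities with C10 and monotonicity to bound the transformed objective by $U(\bar{\mathbf r})$. Your explicit attention to the existence of the time-average limits (or the need to phrase things with $\limsup$/$\liminf$) is a point the paper's proof silently glosses over, but it does not change the substance of the argument.
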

\begin{proof}
Please refer to Appendix A.
\end{proof}

To ensure the average constraints for auxiliary variables in C10,
the virtual queues ${H_{m}}(t)$ and ${H_{j}}(t)$ are introduced for
each HUE and each RUE, respectively, and they evolve as
\begin{equation}
{H_{m}}(t + 1) = \max \{ {H_{m}}(t) - {R_{m}}(t),0\} + {\gamma
_{m}}(t),\label{vq0m}
\end{equation}
\begin{equation}
{H_{j}}(t + 1) = \max \{ {H_{j}}(t) - {R_{j}}(t),0\} + {\gamma
_{j}}(t),\label{vqij}
\end{equation}
where $H_{m}(0) = 0$, $H_{j}(0) = 0$, $ {\gamma _{m}}(t)$ and
${\gamma _{j}}(t)$ will be optimized at each slot.

Similarly, to ensure the EE performance constraint C5, the virtual
queue ${Z}(t)$ with initial value ${Z}(0) = 0$ is also introduced,
and it evolves as
\begin{equation}
Z(t + 1) = \max \{ Z(t) - {\mu _{\rm sum}}(t),0\} + {W}\eta _{{\rm
EE}}^{\rm req}{p_{\rm sum}}(t),\label{vqpower}
\end{equation}

Intuitively, the auxiliary variables ${\gamma _{m}}(t)$, ${\gamma
_{j}}(t)$ and ${W}\eta _{{\rm EE}}^{\rm req}{p_{\rm sum}}(t)$ can be
looked as the ¡°arrivals¡± of virtual queues ${H_{m}}(t)$,
${H_{j}}(t)$ and $Z(t)$, respectively, while ${R_{m}}(t)$,
${R_{j}}(t)$ and ${\mu _{\rm sum}}(t)$ can be looked as the service
rate of such virtual queues.
\begin{theorem}
\emph{The constraints C5 and C10 can be satisfied only when the
virtual queues $ {H_{m}}(t)$, ${H_{j}}(t)$ and ${Z}(t)$ are stable.}
\end{theorem}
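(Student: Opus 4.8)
The plan is to exploit the standard correspondence in Lyapunov optimization between the stability of a virtual queue and the satisfaction of the time-average constraint that the queue encodes, so that stabilizing $H_m(t)$, $H_j(t)$ and $Z(t)$ automatically enforces C5 and C10. All three recursions share the max-plus form $Q(t+1)=\max\{Q(t)-b(t),0\}+a(t)$: for $H_m$ the ``arrival'' is $a(t)=\gamma_m(t)$ and the ``service'' is $b(t)=R_m(t)$; for $H_j$ they are $\gamma_j(t)$ and $R_j(t)$; and for $Z$ they are $a(t)=W\eta_{\rm EE}^{\rm req}p_{\rm sum}(t)$ and $b(t)=\mu_{\rm sum}(t)$. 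Observe that C10 is exactly $\bar\gamma_m\le\bar r_m$ and $\bar\gamma_j\le\bar r_j$, while C5 is equivalent to $W\eta_{\rm EE}^{\rm req}\bar p_{\rm sum}\le\bar\mu_{\rm sum}$; hence each target is an assertion of the form $\bar a\le\bar b$.

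First I would discard the projection onto the nonnegative axis, using $\max\{x,0\}\ge x$ to obtain the one-sided bound $Q(t+1)\ge Q(t)-b(t)+a(t)$. Telescoping this over $t=0,\dots,T-1$ and using the zero initial condition yields $Q(T)\ge\sum_{t=0}^{T-1}\big(a(t)-b(t)\big)$. Dividing by $T$, taking expectations, and letting $T\to\infty$ produces $\lim_{T\to\infty}\mathbb{E}[Q(T)]/T\ge\bar a-\bar b$.

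The next step is to bring in stability. Because the arrivals are bounded by Assumption~1 and the rates and powers are bounded by the finite CSI state space together with the peak-power constraints C3--C4, the increments $|Q(t+1)-Q(t)|$ are uniformly bounded, so strong stability in the sense of Definition~1 implies mean rate stability, i.e. $\mathbb{E}[Q(T)]/T\to 0$. Substituting this into the bound above forces $\bar a-\bar b\le 0$. Applied to $H_m$ and $H_j$ this gives the two inequalities of C10 verbatim, and applied to $Z$ it gives $W\eta_{\rm EE}^{\rm req}\bar p_{\rm sum}\le\bar\mu_{\rm sum}$; dividing by $W\bar p_{\rm sum}$, which is strictly positive owing to the static powers $p_{\rm c}^R$ and $p_{\rm c}^H$, recovers $\eta_{\rm EE}\ge\eta_{\rm EE}^{\rm req}$, i.e. C5.

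The part I expect to be most delicate is the passage from Definition~1 (a finite time-average backlog) to the mean rate statement $\mathbb{E}[Q(T)]/T\to 0$ that the telescoping argument actually consumes; the two are not identical, and I would justify the implication from the uniform boundedness of the increments, arguing that if $\mathbb{E}[Q(T)]/T$ stayed bounded away from zero along some subsequence, the bounded increments would force the running average $\frac{1}{T}\sum_{t}\mathbb{E}[Q(t)]$ to grow without bound, contradicting strong stability. A lesser point is simply recording that $\bar p_{\rm sum}>0$, so that the ratio form of the EE constraint is well defined, which is immediate from the nonzero circuit, fronthaul and backhaul power terms.
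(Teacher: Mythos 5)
Your proposal is correct and follows essentially the same route as the paper's Appendix B: drop the $\max\{\cdot,0\}$ to get $H(t+1)\ge H(t)-b(t)+a(t)$, telescope over $t=0,\dots,T-1$, divide by $T$, and invoke $\mathbb{E}[Q(T)]/T\to 0$ under stability to conclude $\bar a\le\bar b$ for each of $H_m$, $H_j$ and $Z$. Your added justification that strong stability plus bounded increments yields the mean-rate-stability statement actually consumed by the telescoping argument is a point the paper simply asserts without proof, so it is a welcome (if minor) tightening rather than a departure.
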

\begin{proof}
Please refer to Appendix B.
\end{proof}

\subsection{Problem Transformulation via Lyapunov Optimization}

Let $\chi (t) = [
{Q_{m}}(t),{Q_{j}}(t),{H_{m}}(t),{H_{j}}(t),{Z}(t): m \in \mathcal
{U}_H, j \in \mathcal {U}_R]$ denote the vector of the traffic
queues and virtual queues. To represent a scalar metric of queue
congestion, the quadratic Lyapunov function is defined as
\begin{equation}
\begin{array}{l}
L(\chi (t))\! =\! \frac{1}{2}(\!\!\!\!\sum\limits_{m \in {\mathcal
{U}_H}}\!\!\!\!\!{Q_{m}^2(t)}\! + \!\!\!\!\!\sum\limits_{m \in
{\mathcal {U}_H}} {\!\!\!\!\!H_{m}^2(t)} \!+\!\!\!\! {\sum\limits_{j
\in {\mathcal {U}_R}} \!\!\!\!{Q_{j}^2(t)} }\! +\!\!\!\!
{\sum\limits_{j \in {\mathcal {U}_R}} {\!\!\!\!H_{j}^2(t) + } }
{\!Z^2(t)} ),
\end{array}
\end{equation}
where a small value of $L(\chi (t))$ implies that both actual queues
and virtual queues are small and the queues have strong stability.
Therefore, the queue stability can be ensured by persistently
pushing the Lyapunov function towards a lower congestion state. To
stabilize the traffic queues, while additionally satisfy some
average constraints and optimize the system throughput utility, the
Lyapunov conditional drift-minus-utility function is defined as
\begin{equation}
\begin{array}{l}
\Delta (\chi (t)) \!=\!
\mathbb{E}[ L(\chi (t + 1))\! - \!L(\chi (t))\! - \!VU({\bm{\gamma}}(t)) |\chi (t)], \\
\end{array}\label{drift}
\end{equation}
where the control parameter $V (V \ge 0)$ represents the emphasis on
utility maximization compared to queue stability. By adjusting $V$,
flexible design choices among various tradeoff points between queue
delay and throughput utility can be made by operators. With the
dynamics of practical traffic queues and introduced virtual queues,
the upper bound of drift-plus-utility is derived in the following
lemma.
\begin{lemma}
\emph{At slot $t$, for any observed queue state, the Lyapunov
drift-minus-utility of an H-CRAN with any joint congestion control
and resource optimization strategy satisfies the following
inequality,
\begin{equation}
\begin{array}{l}
\Delta (\chi (t)) \le C -\mathbb{E}\left[ { {\sum\limits_{j \in
{\mathcal {U}_R}} {( V\alpha {g_R}({\gamma _{j}(t)})-
{H_{j}}(t){\gamma _{j}}(t)) } } } \right.\\
~~~~~~~~~~~+\!\!\!\left. {\sum\limits_{m \in {\mathcal {U}_H}}
{\!\!\!( V\beta
{g_H}({\gamma_{m}(t)}) - {H_{m}}(t){\gamma _{m}}(t)} ) |\chi(t)} \right]\\
~~~~~~~~~~~- \mathbb{E}\left[ { \sum\limits_{m \in {\mathcal {U}_H}}
({{H_{m}}(t) - {Q_{m}}(t)) {R_{m}}(t) } } \right.\\
~~~~~~~~~~~\left. {+\!{\sum\limits_{j \in {\mathcal {U}_R}} ({
{H_{j}}(t) - {Q_{j}}(t)) {R_{j}}(t)|\chi (t) } } } \right]\\
~~~~~~~~~~~- \mathbb{E}\left[ { \sum\limits_{m \in {\mathcal {U}_H}}
\!\!{{Q_{m}}(t)\mu_{m}(t)\tau } + \!\!\!{\sum\limits_{j \in
{\mathcal{U}_R}}\!\!{{Q_{j}}(t){\mu _{j}}(t)\tau} } } \right.\\
~~~~~~~~~~~\left. {+ {{Z}(t)({\mu _{sum}}(t) - W\eta _{{\rm EE}}^{\rm req}{p_{sum}}(t)) } |\chi (t)} \right],\\
\end{array}\label{driftBound}
\end{equation}
where $C$ is a finite constant parameter that satisfies
\begin{equation}
\begin{array}{l}
C \ge \frac{1}{2}\mathbb{E}\left[ {{{(W\eta _{\rm EE}^{\rm req}{p_{sum}}(t))}^2} + \mu _{sum}^2(t) + \!\!\!\!\sum\limits_{j \in {\mathcal {U}_R}}\!\!\! {(2R_j^2(t) + } } \right. \\
\left. {\mu _j^2(t){\tau ^2} + \gamma _j^2) +\!\!\!\! \sum\limits_{m
\in {\mathcal {U}_H}}\!\!\! {(2R_m^2(t) + \mu _m^2(t){\tau ^2} +
\gamma _m^2)} |\chi (t)} \right].
\end{array}
\end{equation}
}
\end{lemma}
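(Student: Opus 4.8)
The plan is to bound the one-slot change of each squared queue term separately via the standard drift inequality and then assemble the pieces, which is exactly the drift-plus-penalty bookkeeping underlying Lyapunov optimization. First I would establish the elementary fact that for any queue updated as $q(t+1)=\max\{q(t)-b(t),0\}+a(t)$ with $a(t),b(t)\ge 0$ and $q(t)\ge 0$, one has
\begin{equation}
q^2(t+1)-q^2(t)\le a^2(t)+b^2(t)+2q(t)\left(a(t)-b(t)\right).
\end{equation}
This follows by squaring the update, using $[\max\{x,0\}]^2\le x^2$ on the departure part and $\max\{q(t)-b(t),0\}\le q(t)$ together with $a(t)\ge 0$ on the cross term. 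Note that each of the five queues starts at $0$ and, by its update rule, stays nonnegative, so the hypothesis $q(t)\ge 0$ holds throughout.

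Next I would apply this inequality to each of the five queue families, matching the arrival/service roles dictated by their dynamics: for $Q_m$ take $a=R_m(t)$, $b=\mu_m(t)\tau$; for $Q_j$ take $a=R_j(t)$, $b=\mu_j(t)\tau$; for $H_m$ take $a=\gamma_m(t)$, $b=R_m(t)$; for $H_j$ take $a=\gamma_j(t)$, $b=R_j(t)$; and for $Z$ take $a=W\eta_{\rm EE}^{\rm req}p_{sum}(t)$, $b=\mu_{sum}(t)$. Summing the five bounds, multiplying by $\tfrac12$, taking the conditional expectation $\mathbb{E}[\cdot\mid\chi(t)]$, and subtracting $VU(\bm{\gamma}(t))$ then produces an upper bound on $\Delta(\chi(t))$ consisting of (i) one half the expected sum of all squared arrival/service terms and (ii) the expected cross terms $q(t)(a(t)-b(t))$.

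I would then collect all the squared terms from (i) into the single constant $C$. Verifying that $C$ is finite is the one substantive point of the proof: it requires every arrival and service quantity to be uniformly bounded. Here Assumption 1 bounds $R_m,\gamma_m\le A_m^{\max}$ and $R_j,\gamma_j\le A_j^{\max}$; the transmit-power constraints C3 and C4 bound $p_{sum}(t)$, and the finite CSI state space (together with the power constraints) bounds the instantaneous rates $\mu_m,\mu_j$ and hence $\mu_{sum}$. Together these guarantee the right-hand side of the $C$-expression is finite and reproduce the stated form, with the $R$ terms carrying coefficient $2$ precisely because each $R$ contributes one squared term through its actual queue $Q$ and one through its virtual queue $H$.

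Finally I would regroup the cross terms by the variable they multiply. The $\gamma$-terms $H_m(t)\gamma_m(t)$ and $H_j(t)\gamma_j(t)$ combine with $-VU(\bm{\gamma}(t))=-V\alpha\sum_j g_R(\gamma_j(t))-V\beta\sum_m g_H(\gamma_m(t))$ to form the first bracket; the $R$-terms $Q_m(t)R_m(t)-H_m(t)R_m(t)$ and $Q_j(t)R_j(t)-H_j(t)R_j(t)$ form the second bracket after factoring out $R$; and the remaining $-Q_m(t)\mu_m(t)\tau$ and $-Q_j(t)\mu_j(t)\tau$, together with $Z(t)(W\eta_{\rm EE}^{\rm req}p_{sum}(t)-\mu_{sum}(t))$, form the third bracket. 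Matching signs then reproduces the claimed inequality exactly. The only genuine obstacle is the boundedness argument for $C$; everything else is routine telescoping and regrouping.
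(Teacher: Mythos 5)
Your proposal is correct and follows essentially the same route as the paper's Appendix C: the same elementary squared-queue inequality (your form $q^2(t+1)-q^2(t)\le a^2+b^2+2q(a-b)$ is just the paper's $(\max[a-b,0]+c)^2\le a^2+b^2+c^2-2a(b-c)$ specialized to each queue), applied to the five queue families, summed, halved, conditioned on $\chi(t)$, and combined with $-VU(\bm{\gamma}(t))$. Your explicit justification that $C$ is finite (via $A_m^{\max}$, $A_j^{\max}$, the power constraints C3--C4, and the finite CSI state space) is a point the paper leaves implicit, but it does not change the argument.
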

\begin{proof}
Please refer to Appendix C.
\end{proof}

According to the theory of Lyapunov optimization, instead of
minimizing the drift-minus-utility expression (\ref{drift})
directly, a good joint congestion control and resource optimization
strategy can be obtained by minimizing the right hand side (R.H.S.)
of (\ref{driftBound}) at each slot, which can be decoupled to a
series of independent subproblems and can be solved concurrently
with the real-time online observation of traffic queues and virtual
queues at each slot.

\subsection{Problem Decomposition}

\subsubsection{Auxiliary Variable Selection}
The optimal auxiliary variables can be obtained by minimizing the
first item of R.H.S. of (\ref{driftBound}) at each slot, i.e. $ -
{\sum\limits_{j \in {\mathcal {U}_R}} ({ V\alpha {g_R}({\gamma
_{j}(t)}) - {H_{j}}(t){\gamma _{j}}(t)) } } - \sum\limits_{m \in
{\mathcal {U}_H}} {( V\beta {g_H}({\gamma _{m}(t)}) -
{H_{m}}(t){\gamma _{m}}(t)} )$. Since the auxiliary variables are
independent among different UEs, the minimization can be decoupled
to be computed for each UE separately as
\begin{equation}
\begin{array}{l}
\mathop {\max }\limits_{{\gamma _{j}(t)}}~~ V\alpha {g_R}({\gamma
_{j}(t)}) - {H_{j}}(t){\gamma _{j}}(t) ~~~~\\
~{\rm{s.t.}}~~~ \gamma_{j}(t) \le A_{j}^{\max},~~~~~~~~~~~~~~~~~~~~
\end{array}\label{optimalauxij}
\end{equation}
\begin{equation}
\begin{array}{l}
\mathop {\max }\limits_{{\gamma _{m}(t)}} ~~V\beta {g_H}({\gamma
_{m}(t)})
- {H_{m}}(t){\gamma _{m}}(t)\\
~{\rm{s.t.}}~~~\gamma_{m}(t) \le A_{m}^{\max}.~~~~~~~~~~~~~~~~~~
\end{array}\label{optimalauxm}
\end{equation}

Apparently, the problems above are both convex optimization
problems. Therefore, the optimal auxiliary variables can be derived
by differentiating the objective function and make the result equal
to zero. In the case of logarithmic utility function, we have
${\gamma _{j}(t)} = \min \left[ { \frac{{V\alpha
}}{{{H_{j}(t)}}},A_{j}^{\max}} \right]$ and ${\gamma _{m}(t)} = \min
\left[ { \frac{{V\beta }}{{{H_{m}(t)}}},A_{m}^{\max}} \right] $,
where a larger ${H_{j}}(t)$ decreases ${\gamma _{j}}(t)$, which in
turn avoids the further increase of ${H _{j}}(t)$.
\subsubsection{Optimal Traffic Admission Control}

The optimal traffic admission control can be obtained by minimizing
the second item of R.H.S. of (\ref{drift}) at each slot, i.e. $
\sum\limits_{m \in {\mathcal {U}_H}} [{ {H_{m}}(t) - {Q_{m}}(t)]
{R_{m}}(t) } + {\sum\limits_{j \in {\mathcal {U}_R}} [{ {H_{j}}(t)
- {Q_{j}}(t)] {R_{j}}(t) }}$. Similarly, it can be further decoupled
to be computed for each UE separately as follows

\begin{equation}
\begin{array}{l}
\mathop {\max }\limits_{{R_{m}}} ~~ (
{H_{m}}(t) - {Q_{m}}(t)) {R_{m}}(t)\\
~~{\rm{s.t.}}~~ {R_{m}}(t) \le A_{m}(t),
\end{array}
\end{equation}
\begin{equation}
\begin{array}{l}
\mathop {\max }\limits_{{R_{j}}} ~~( {H_{j}}(t) -
{Q_{j}}(t)) {R_{j}}(t)\\
~~{\rm{s.t.}}~~ {R_{j}}(t) \le A_{j}(t),
\end{array}
\end{equation}
which are linear problems with the following optimal solutions:
\begin{equation}
{R_{m}}(t) = \left\{ \begin{array}{l}
 A_m(t), ~{\rm{if}}~ {H_{m}}(t) - {Q_{m}}(t) > 0, \\
 0,~~~~~~~ {\rm else}, \\
 \end{array}\right.\label{optimaladmissionm}
\end{equation}
\begin{equation}
{R_{j}}(t) = \left\{ \begin{array}{l}
 A_{j}(t),~{\rm if}~{H_{j}}(t) - {Q_{j}}(t) > 0, \\
 0,~~~~~~~ {\rm else}. \\
 \end{array}\right.\label{optimaladmissionij}
\end{equation}

This is a simple threshold-based admission control strategy. When
the traffic queue $Q_{m}(t)$ (or $Q_{j}(t)$) is smaller than a
threshold $H_{m}(t)$ (or $H_{j}(t)$), then the newly traffic
arrivals are admitted into the maintained traffic queues.
Consequently, this not only reduces the value of $H_{m}(t)$ (or
$H_{j}(t)$)so as to push $\gamma_{m}(t)$ (or $\gamma_{j}(t)$) to
become closer to $R_{m}(t)$ (or $R_{j}(t)$), but also increases the
throughput $R_{m}(t)$ (or $R_{j}(t)$) so as to improve the utility.
On the other hand, when traffic queue $Q_{m}(t)$ or ($Q_{j}(t)$) is
larger than a threshold $H_{m}(t)$ (or $H_{j}(t)$), then the traffic
arrivals will be denied to ensure the stability of traffic queues.

\subsubsection{Optimal User Association, RB and Power Allocation}

The optimal user association, RB, and power allocation at slot
$t$ can be obtained by minimizing the remaining item of R.H.S. of
(\ref{drift}), which is expressed as
\begin{equation}
\begin{array}{l}
\mathop {\min }\limits_{{\bf{s}},{\bf{p}},{\bf{a}}} -\!\!\!
\sum\limits_{m \in {\mathcal {U}_H}}\!\!\! {B_m(t){\mu _m}(t)}
-\!\!\! \sum\limits_{j \in {\mathcal {U}_R}} \!\!\!{B_j(t){\mu
_j}(t)}\\
~~~~~+ Y_R(t)\sum\limits_{i \in \mathcal {R}} {{p_i}(t)} + Y_H(t)p_H(t) \\
~{\rm s.t.~C1,C2,C3,C4,C8}. \\
\end{array}
\label{subproblem3}
\end{equation}
where $B_m(t) = Q_m(t)\tau + Z(t)$, $B_j(t) = Q_j(t)\tau + Z(t)$,
$Y_R(t) = W{\eta _{{\rm EE}}^{\rm req}\varphi _{\rm eff}^R }{Z}(t)$,
$Y_H(t) = W{\eta _{{\rm EE}}^{\rm req}\varphi _{\rm eff}^H }{Z}(t)$.
However, since the transmission rate $\mu_{m}(t)$, $\mu_{j}(t)$ and
the transmit power consumption $p_i(t)$ and $p_H(t)$ are functions
of user association $s_m(t)$, RB allocation $a_{jk}(t)$,
$a_{mk}(t)$ and $b_{ml}(t)$ and power allocation $p_{ijk}(t)$,
$p_{imk}(t)$ and $p_{ml}(t)$, this subproblem is a mixed-integer
nonconvex problem and is usually prohibitively difficult to solve.
To address this challenge, the computationally efficient algorithm
for this subproblem will be studied in the next section.

\section{Optimal User Association, RB and Power Allocation}

In this section, we commit to an effective method to solve the
subproblem of user association, RB and power allocation. The
continuity relaxation of binary variables and the Lagrange dual
decomposition method will be first utilized, upon which the optimal
primal solution is then obtained. As this subproblem is optimized at
each slot, the slot index $t$ will be ignored for brevity.

\subsection{Continuity Relaxation}
The multiplicative binary variables are first removed as ${x_{mk}} =
(1 - {s_m}){a_{mk}}$ and ${y_{ml}} = {(1 - s_m)}{b_{ml}} $, where
${x_{mk}} \in [0,1]$ and ${y_{ml}} \in [0,1]$. The binary variables
$a_{jk}$, $x_{mk}$ and $y_{ml}$ are then relaxed to take continuous
values in [0,1]. Furthermore, to make the problem tractable, the
auxiliary variables are introduced as ${w_{ijk}} =
{a_{jk}}{p_{ijk}}$, ${v_{imk}} = {x_{mk}}{p_{imk}}$ and {$u_{ml} =
y_{ml}p_{ml}$}. Let ${\bf{x}} = [a_{jk}, x_{mk}, y_{ml}: j \in
\mathcal {U}_R, m \in \mathcal {U}_H, k \in \mathcal {K}_R, l \in
\mathcal {K}_H]$ denote the vector of relaxed RB allocation
variables. Let ${\bf{w}} = [w_{ijk}, v_{imk}, u_{ml} : i \in
\mathcal {R}, j \in \mathcal {U}_R, m \in \mathcal {U}_H, k \in
\mathcal {K}_R, l \in \mathcal {K}_H]$ denote the vector of
introduced auxiliary variables. Thus the optimization problem
(\ref{subproblem3}) can be finally rewritten as
\begin{equation}
\begin{array}{l}
\mathop {\min }\limits_{{\bf{x}},{\bf{w}}} -\! \sum\limits_{m \in
{\mathcal {U} _H}} {{B_m}(\sum\limits_{l \in \mathcal {K}_H}
{{y_{ml}}{{\log }_2} (1 + {g_{ml}}{u_{ml}}/y_{ml})} }\\
~~~~~{+\sum\limits_{k \in \mathcal {K}_R} {{x_{mk}}{{\log }_2}
(1 + \sum\limits_{i \in \mathcal {R}} {{v_{imk}}{g_{imk}}} /{x_{mk}})} )} \\
~~~~~- \!\sum\limits_{j \in {\mathcal {U} _R}}
{\!\!{B_j}\!\!\sum\limits_{k \in \mathcal {K}_R} {{\!\!a_{jk}}{{\log
}_2}(1 + \sum\limits_{i \in \mathcal {R}} {{w_{ijk}}{g_{ijk}}}
/{a_{jk}})} }\\
~~~~~+ Y_R\sum\limits_{i \in \mathcal {R}} {(\sum\limits_{k \in
\mathcal {K}_R}\! {\sum\limits_{m \in {\mathcal {U} _H}} {{v_{imk}}}
} \!+\!\! \sum\limits_{k \in \mathcal
{K}_R}\! {\sum\limits_{j \in {\mathcal {U} _R}} {{w_{ijk}}} } )}\\
~~~~~+ Y_H \sum\limits_{m \in {\mathcal {U}_H}} {\sum\limits_{l \in
\mathcal {K}_H} {{u_{ml}}} }
\\
~{\rm s.t.}~\sum\limits_{j \in {\mathcal {U} _R}} {{a_{jk}}} +
\sum\limits_{m \in {\mathcal {U} _H}} {{x_{mk}}} \le 1,\forall k,\\
~~~~~\sum\limits_{m \in {\mathcal {U} _H}} {{x_{ml}}} \le 1,\forall l,\\
~~~~~~\sum\limits_{k \in \mathcal {K}_R} {\sum\limits_{m \in {\mathcal {U} _H}} {{v_{imk}}} } + \sum\limits_{k \in \mathcal {K}_R} {\sum\limits_{j \in {\mathcal {U} _R}} {{w_{ijk}}} } \le p_i^{\max },\forall i,\\
~~~~~~\sum\limits_{m \in {\mathcal {U}_H}} {\sum\limits_{l \in\mathcal{K}_H} {{u_{ml}}} } \le p_H^{\max},\\
~~~~~~~~a_{jk}, x_{mk}, y_{ml} \in [0,1], \forall j,k,m,l.
\end{array}
\label{relax}
\end{equation}

Since the term {$-{{x_{mk}}{{\log }_2}(1 + \sum\limits_{i \in
\mathcal {R}} {{v_{imk}}{g_{imk}}} /{x_{mk}})}$}, $ - {y_{ml}}{\log
_2}(1 + {g_{ml}}{u_{ml}}/{y_{ml}})$ and $-{{a_{jk}}{{\log }_2}(1\!\!
+\!\!\!\!\sum\limits_{i \in \mathcal {R}} \!\!\!{{w_{ijk}}{g_{ijk}}}
/{a_{jk}})}$ are the perspective functions of convex functions
$-{{{\log }_2}(1 + \sum\limits_{i \in \mathcal {R}}
{{v_{imk}}{g_{imk}}})}$, $ - {\log _2}(1 + {g_{ml}}{u_{ml}})$ and
$-{{{\log }_2}(1 + \sum\limits_{i \in \mathcal {R}}
{{w_{ijk}}{g_{ijk}}})}$, respectively, the objective of
(\ref{relax}) is a convex function. Furthermore, the constraints of
(\ref{relax}) are all linear with the continuity relaxation of
binary variables. According to the Salter's condition, the zero
Lagrange duality gap is guaranteed{\cite{cvx}}.

\subsection{Dual Decomposition}

The convex optimization problem can be solved by Lagrange dual
decomposition. Specifically, the Lagrangian function of the primal
objective function is given by
\begin{equation}
\begin{array}{l}
L({\bm{\lambda}}) = \mathop {\min }\limits_{{\bf{x}},{\bf{w}}}
-\!\!\! \sum\limits_{m \in {\mathcal {U} _H}}
\!\!\!{{B_m}(\sum\limits_{l \in
\mathcal {K}_H} {{y_{ml}}{{\log }_2}(1+u_{ml}g_{ml}/y_{ml})} }\\
{+\sum\limits_{k \in \mathcal {K}_R} {{x_{mk}}{{\log }_2}(1 +
\sum\limits_{i \in \mathcal {R}} {{v_{imk}}{g_{imk}}} /{x_{mk}})} )} \\
- \sum\limits_{j \in {\mathcal {U} _R}} {{B_j}\sum\limits_{k \in
\mathcal {K}_R} {{a_{jk}}{{\log }_2}(1 + \sum\limits_{i \in \mathcal
{R}} {{w_{ijk}}{g_{ijk}}} /{a_{jk}})} } \\
+ \sum\limits_{i \in\mathcal {R}} {({Y_R} + {\theta
_i})(\sum\limits_{k \in \mathcal {K}} {\sum\limits_{m \in {\mathcal
{U} _H}} {{v_{imk}}} } + \sum\limits_{k \in \mathcal {K}_R}
{\sum\limits_{j \in {\mathcal {U}
_R}} {{w_{ijk}}} } )}\\
- \sum\limits_{i \in \mathcal {R}} {{\theta _i}p_i^{\max }} + (Y_H +
\theta _0)\!\!\sum\limits_{m \in {\mathcal {U}_H}}\! {\sum\limits_{l
\in \mathcal
{K}_H} {{u_{ml}}} } -\theta _0p_H^{\max}, \\
\end{array}\label{LagDual}
\end{equation}
where ${\bm{\theta}} = [\theta _0, \theta _1, \theta _2, ..., \theta
_N]$ is the vector of Lagrangian dual variables related to the HPN
and RRH transmit power constraints. The Lagrangian dual function is
given by
\begin{equation}
\begin{array}{l}
D({\bm{\theta}}) = \mathop {\min
}\limits_{{\bf{x}},{\bf{w}}}L({\bm{\theta}})\\
~{\rm s.t.}\sum\limits_{m \in {\mathcal {U} _H}} {{y_{ml}}} \le 1, \forall l,\\
~~~~~~\sum\limits_{j \in {\mathcal {U} _R}} {{a_{jk}}} + \sum\limits_{m \in {\mathcal {U} _H}} {{x_{mk}}} \le 1, \forall k,\\
~~~~~~~~a_{jk}, x_{mk}, y_{ml} \in [0,1], \forall j,k,m,l.
\end{array}
\end{equation}
and the dual optimization problem is given by
\begin{equation}
\begin{array}{l}
{\mathop {\max }\limits_{\bm{\theta}} ~D(\bm{\theta} )} \\
{\rm s.t.}~~{\bm{\theta}} \succeq 0,\\
\end{array}
\end{equation}

Based on the Karush-Kuhn-Tucker (KKT) conditions, the optimal power
allocation can be obtained by differentiating the objective function
of (\ref{LagDual}) with respect to $v_{{imk}}$, $w_{{ijk}}$ and
$u_{ml}$, which are given by

\begin{equation}
v_{{imk}}^* \!=\! {\left[ {\frac{{{B_m}}}{{(Y_R + {\theta _i})\ln
2}} - \frac{{1 + \sum\limits_{i' \ne i} {v_{{i'mk}}^*{g_{i'mk}}}
}}{{{g_{imk}}}}} \right]^ + }{x_{mk}}, \label{zimk}
\end{equation}
\begin{equation}
w_{{ijk}}^* \!=\! {\left[ {\frac{{{B_j}}}{{(Y_R + {\theta _i})\ln
2}} - \frac{{1 + \sum\limits_{i' \ne i} {w_{{i'jk}}^*{g_{i'jk}}}
}}{{{g_{ijk}}}}} \right]^ + }{a_{jk}},\label{wijk}
\end{equation}
\begin{equation}
u_{ml}^* = {\left[ {\frac{{{B_{m}}}}{{({Y_H} + {\theta _0})\ln 2}} -
\frac{1}{{{g_{ml}}}}} \right]^ + }{y_{ml}},\label{uml}
\end{equation}
where $[x]^+ = \max\{x,0\}$. The derived power allocations have the
form of multi-level watering-filling and the water-filling levels
are determined by the traffic queue states and the virtual queue
state.

Substituting the optimal power allocations $v_{{imk}}^*$,
$w_{{ijk}}^*$ and $u_{ml}^*$ into (\ref{LagDual}) and denoting
\begin{equation}
{\Phi _{mk}}\! =\!\! \sum\limits_{i \in \mathcal {R}}\!\! {(Y_R +
{\theta _i}){p_{imk}}} \!\!-\!\! {B_m}\!{R_b}{\log _2}(1\! +\!\!
\sum\limits_{i \in \mathcal {R}} {{p_{imk}}{g_{imk}}} ),
\end{equation}
\begin{equation}
{\Lambda _{jk}} = \sum\limits_{i \in \mathcal {R}} {(Y_R + {\theta
_i}){p_{ijk}}} - {B_j}{R_b}{\log _2}(1 + \sum\limits_{i \in \mathcal
{R}} {{p_{ijk}}{g_{ijk}}} ),
\end{equation}
\begin{equation}
{\Gamma _{ml}} =(Y_H + \theta _0)p_{ml} - {B_{m}}{R_b}\log (1 +
{g_{ml}}{p_{ml}}),
\end{equation}

For notation simplicity, the dual function can be simplified as

\begin{equation}
\begin{array}{l}
\mathop {\min }\limits_{\bf {x}}\! \sum\limits_{m \in {\mathcal {U} _H}} \!
{\sum\limits_{k \in \mathcal {K}_R} {\!\!\!{\Phi _{mk}}{x_{mk}}} } + \!\!\!
\sum\limits_{m \in {\mathcal {U} _H}} \!{\sum\limits_{l \in \mathcal {K}_H}
 {\!\!\!{\Gamma _{ml}}{y_{ml}}} } + \!\!\!\sum\limits_{j \in {\mathcal {U} _R}}\!
 {\sum\limits_{k \in \mathcal {K}_R} {\!\!\!{\Lambda _{jk}}{a_{jk}}} } \\
~{\rm s.t.}\sum\limits_{m \in {\mathcal {U} _H}} {{y_{ml}}} \le 1, \forall l,\\
~~~~~\sum\limits_{j \in {\mathcal {U} _R}} {{a_{jk}}} + \sum\limits_{m \in {\mathcal {U} _H}} {{x_{mk}}} \le 1, \forall k,\\
~~~~~~a_{jk}, x_{mk}, y_{ml} \in [0,1], \forall j,k,m,l.
\end{array}
\end{equation}
{which is a linear programming (LP) problem. It can be proven that
if the bounded linear programming problem has an optimal solution,
then at least one of the optimal solutions is composed of the
extreme points \cite{LPbook}.

With the continuity relaxation,} the optimal RB allocation and
user association will be derived effectively according to the
following scheme.

\begin{itemize}
\item For the RB $k$ of RRH tier, the RB allocation
to HUE $m$ is decided by
\begin{equation}
x_{mk} = \left\{ {\begin{array}{*{20}{c}}
\begin{array}{l}
\!\!1,~{\rm if}~m = \arg \min \{ {\Phi _{mk}}:m \in {\mathcal {U}_{H}}\}\\
~~~~\& {\Phi _{mk}} < \min \{ {\Lambda _{jk}}:j \in {\mathcal {U}_R}\} \\
~~~~\& {\Phi _{mk}} < \min \{ {\Gamma _{ml}}:l \in {\mathcal {K}_H}\}, \\
\end{array} \\
\!\!{0,~{\rm else.}}~~~~~~~~~~~~~~~~~~~~~~~~~~~~~~~~~~ \\
\end{array}} \right.\label{OpYmkR}
\end{equation}
If there is RB of RRH tier allocated to HUE $m$, then we have $s_m
= 1$.

\item The remaining RBs of RRH tier will be allocated to RUEs.
Let $\mathcal {K}'_R$ denote the remaining RBs of RRH tier, then for
RB $k \in \mathcal {K}'_R$, the $a_{jk}$ is given by
\begin{equation}
{a_{jk}} = \left\{ {\begin{array}{*{20}{c}}
{1,{~\rm if~}j = \arg \min \{ {\Lambda _{jk}}:j \in {\mathcal {U}_R}\} \& {\Lambda _{jk}} < 0}, \\
{0,~{\rm else.}} ~~~~~~~~~~~~~~~~~~~~~~~~~~~~~~~~~~~~~~~~~~ \\
\end{array}} \right.\label{OpAijk}
\end{equation}

\item After the RB allocation of RRH tier is accomplished, the RBs of
HPN tier will be allocated. Let $\mathcal {U}_0'$ denote the set of
HUEs that are served by HPN. The RB allocation $y_{ml}$ is given by
\begin{equation}
y_{ml} = \left\{ {\begin{array}{*{20}{c}}
{1,{~\rm if~}m = \arg \min \{ {\Gamma _{ml}}:m \in {\mathcal {U}_0'}\} ,} \\
{0,~{\rm else.}~~~~~~~~~~~~~~~~~~~~~~~~~~~~~~~~~} \\
\end{array}} \right.\label{OpYmkH}
\end{equation}
\end{itemize}

{It is worth noting that, after the continuity relaxation, the
binary $x_{mk}$, $a_{jk}$ and $y_{ml}$ can be still obtained at the
extreme point of constraint set, i.e., 0 or 1.}

To recover the optimal primal solution, the dual variables are then
iteratively computed using the subgradient
method{\cite{subgrident}},
\begin{equation}
\theta _0^{(n + 1)} = {\left[ {\theta _0^{(n)} + {\xi
_0^{(n+1)}}\nabla _0^{(n + 1)}} \right]^ + },\label{Lagdualupdate1}
\end{equation}
\begin{equation}
\theta _i^{(n + 1)} = {\left[ {\theta _i^{(n)} + \xi _i^{(n +
1)}\nabla _i^{(n + 1)}} \right]^ + },\label{Lagdualupdate2}
\end{equation}
where $n$ is the iteration index, $\xi _0^{(n)}$ and $\xi _i^{(n)}$
is the step size at the $n$-th iteration to guarantee the
convergence, $\nabla _0^{(n + 1)} $ and $\nabla _i^{(n + 1)}$ are
the subgradient of the dual function, which are given by
\begin{equation}
\nabla _0^{(n + 1)} = \left( {\sum\limits_{m \in {\mathcal {U}_H}}
{\sum\limits_{l \in \mathcal {K}_H} {{u_{ml}^{(n)}}} } - p_H^{\max
}} \right),
\end{equation}
\begin{equation}
\nabla _i^{(n + 1)} \!=\! \left( {\sum\limits_{j \in {\mathcal
{U}_R}}\! {\sum\limits_{k \in \mathcal {K}_R} \!\!{w_{ijk}^{(n)}} }
+\!\! \sum\limits_{j \in {\mathcal {U}_R}}\! {\sum\limits_{k \in
\mathcal {K}_R} \!\!{v_{imk}^{(n)}} }\! - p_i^{\max }} \right).
\end{equation}

Finally, the overall procedure of joint congestion control and
resource optimization is summarized in the \textbf{Algorithm 1}.

\begin{algorithm}[h]
\begin{algorithmic}[1]
\caption{The Joint Congestion Control and Resource Optimization
Algorithm} \STATE For each slot, observe the traffic queues
$Q_m(t)$, $Q_{j}(t)$ and the virtual queues $H_m(t)$, $H_{j}(t)$,
$Z(t)$; \STATE Calculate the optimal auxiliary variables $\gamma
_m(t)$ and $\gamma _{j}(t)$ by solving (\ref{optimalauxij}) and
(\ref{optimalauxm}); \STATE Determine the optimal amount of admitted
traffics $R_m(t)$ and $R_{j}(t)$ according to
(\ref{optimaladmissionm}) and (\ref{optimaladmissionij}); \REPEAT
\STATE Obtain the optimal power allocation $p_{imk}$ and $p_{ijk}$
of RRH tier by iteratively updating (\ref{zimk}) and
(\ref{wijk});\STATE Obtain the optimal power allocation $p_{ml}$ of
HPN according to (\ref{uml}); \STATE Obtain the optimal RB
allocation $a_{mk}$ and $a_{jk}$ of RRH tier according to
(\ref{OpYmkR}) and (\ref{OpAijk}) and derive the optimal user
association $s_m$; \STATE Obtain the optimal RB allocation $b_{ml}$
of HPN tier according to (\ref{OpYmkH});\STATE Update the Lagrangian
dual variables $\bm{\theta}$ according to (\ref{Lagdualupdate1}) and
(\ref{Lagdualupdate2});\UNTIL certain stopping criteria is met;
\STATE Update the traffic queues $Q_m(t)$, $Q_{j}(t)$ and the
virtual queues $H_m(t)$, $H_{j}(t)$ and $Z(t)$ according to
(\ref{dynamicsQ0m}), (\ref{dynamicsQij}), (\ref{vq0m}), (\ref{vqij})
and (\ref{vqpower}).
\end{algorithmic}
\end{algorithm}

\section{Performance Bounds}

In this section, the performance bounds of the proposed
algorithm based on Lyapunov optimization will be mathematically
analyzed.

\subsection{Bounded Queues}
Suppose $\phi_H $ and $\phi_R$ are the largest right-derivative of
$g_H(.)$ and $g_R(.)$, respectively, then the proposed algorithm
based on Lyapunov optimization ensures that the traffic queues are
bounded, which is given by \textbf{Theorem} \ref{TheoremQ}.

\begin{theorem}
\emph{For arbitrary traffic arrival rates (possibly exceeding the
network capacity of H-CRAN) {and certain EE requirement}, an H-CRAN
using the proposed algorithm with any $V \ge 0$ can guarantee the
following bounds of traffic queues:}
\begin{equation}
{Q_{j}}(t) \le V\alpha {\phi _R} + 2{A_{j}^{\max }},\label{boundQij}
\end{equation}
\begin{equation}
{Q_{m}}(t) \le V\beta {\phi _H} + 2{A_{m}^{\max}}.\label{boundQ0m}
\end{equation}
\label{TheoremQ}
\end{theorem}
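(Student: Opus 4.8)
The plan is to prove both bounds by a two-stage induction: first establish an auxiliary bound on the virtual queues $H_j(t)$ and $H_m(t)$, then combine it with the threshold admission rule to bound the traffic queues $Q_j(t)$ and $Q_m(t)$. I present the argument for the RUE queues only; the HUE case is identical after replacing $(\alpha,\phi_R,A_j^{\max})$ by $(\beta,\phi_H,A_m^{\max})$.

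First I would show the intermediate claim $H_j(t) \le V\alpha\phi_R + A_j^{\max}$ for all $t$. The crucial observation concerns the structure of the auxiliary-variable subproblem (\ref{optimalauxij}): since $g_R$ is concave, its right-derivative is non-increasing, so the largest right-derivative $\phi_R$ is attained at the origin, $\phi_R = g_R'(0^+)$. Consequently, whenever $H_j(t) > V\alpha\phi_R$, the derivative of the objective $V\alpha g_R'(\gamma_j) - H_j(t)$ is strictly negative for every feasible $\gamma_j \ge 0$, which forces the maximizer to be $\gamma_j(t) = 0$. With this implication the induction is routine: the base case holds because $H_j(0) = 0$; for the step, if $H_j(t) \le V\alpha\phi_R$ then the virtual-queue dynamics (\ref{vqij}) give $H_j(t+1) \le H_j(t) + \gamma_j(t) \le V\alpha\phi_R + A_j^{\max}$ using $\gamma_j(t) \le A_j^{\max}$ from C9, whereas if $H_j(t) > V\alpha\phi_R$ then $\gamma_j(t) = 0$ yields $H_j(t+1) \le H_j(t)$, preserving the bound.

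Next I would bound $Q_j(t)$ by $V\alpha\phi_R + 2A_j^{\max}$ through a second induction that invokes the threshold admission rule (\ref{optimaladmissionij}). The base case is immediate since $Q_j(0) = 0$. For the inductive step, if $Q_j(t) \le V\alpha\phi_R + A_j^{\max}$, then admitting a full arrival in the worst case gives $Q_j(t+1) \le Q_j(t) + R_j(t) \le V\alpha\phi_R + 2A_j^{\max}$ because $R_j(t) \le A_j(t) \le A_j^{\max}$. If instead $Q_j(t) > V\alpha\phi_R + A_j^{\max}$, then the bound just established on $H_j(t)$ gives $H_j(t) \le V\alpha\phi_R + A_j^{\max} < Q_j(t)$, so $H_j(t) - Q_j(t) < 0$ and the admission rule sets $R_j(t) = 0$; the traffic dynamics (\ref{dynamicsQij}) then give $Q_j(t+1) \le Q_j(t)$, which is at most $V\alpha\phi_R + 2A_j^{\max}$ by the inductive hypothesis.

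The only delicate point, and the main obstacle, is the first step: translating the optimality condition of the concave subproblem (\ref{optimalauxij}) into the implication that $H_j(t) > V\alpha\phi_R$ forces $\gamma_j(t) = 0$. Everything after that is elementary queue bookkeeping. Once this implication is in hand, the two inductions chain together to deliver (\ref{boundQij}) and (\ref{boundQ0m}).
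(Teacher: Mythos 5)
Your proposal is correct and follows essentially the same route as the paper's proof in Appendix D: first bound the virtual queue $H_j(t)$ by $V\alpha\phi_R + A_j^{\max}$ using the concavity of $g_R$ (the paper uses the chord bound $g_R(\gamma) \le g_R(0) + \phi_R\gamma$ where you use the monotone right-derivative, which is the same fact), then invoke the threshold admission rule to bound $Q_j(t)$. If anything, your write-up is slightly more complete, since you make explicit the inductive case $Q_j(t) > V\alpha\phi_R + A_j^{\max}$ (where $R_j(t)=0$ and the queue cannot grow), which the paper leaves implicit.
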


\begin{proof}
Please refer to Appendix D.
\end{proof}

\subsection{Utility Performance}
The utility performance of proposed solution based on Lyapunov
optimization is given by \textbf{Theorem} \ref{TheoremU}.

\begin{theorem}
\emph{For arbitrary arrival rates{ and certain EE requirement}, an
H-CRAN using the proposed algorithm with any $V \ge 0$ can provides
the following utility performance under certain EE requirement:
\begin{equation}
U(\bar {\bf{r}}) \ge {U^*} - C/V,\label{utilitybound}
\end{equation}
where ${U^*}$ is the optimal infinite horizon utility over all
algorithms that stabilize traffic queues {and satisfy required EE
performance constraint.}}\label{TheoremU}
\end{theorem}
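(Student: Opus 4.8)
The plan is to convert the per-slot bound of \textbf{Lemma} 1 into a long-run utility guarantee by comparing the online policy against an idealized stationary randomized benchmark. The structural fact that drives everything is that \textbf{Algorithm} 1 was designed precisely to minimize the right-hand side (R.H.S.) of (\ref{driftBound}) at every slot over all feasible $\{\bf{r},\bf{s},\bf{p},\bf{a},\bm{\gamma}\}$ obeying C1--C4 and C7--C9. Consequently, for the decisions actually taken, $\Delta(\chi(t))$ is no larger than that same R.H.S. evaluated at \emph{any} other feasible decision rule, and in particular at a well-chosen stationary, queue-independent policy.

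First I would invoke the standard existence result for such a benchmark. Since $U^*$ is the optimal utility over all policies that stabilize the traffic queues and meet the EE requirement, for every $\epsilon>0$ there is a stationary randomized policy, marked with a superscript $\star$, whose decisions depend only on the i.i.d. channel and arrival realizations (not on $\chi(t)$) and that simultaneously satisfies, in expectation, $\mathbb{E}[U(\bm{\gamma}^\star(t))]\ge U^*-\epsilon$; the averaged form of C10, $\mathbb{E}[\gamma_m^\star]\le\mathbb{E}[R_m^\star]$ and $\mathbb{E}[\gamma_j^\star]\le\mathbb{E}[R_j^\star]$; mean-rate stability of the traffic queues, $\mathbb{E}[R_m^\star]\le\tau\,\mathbb{E}[\mu_m^\star]$ and $\mathbb{E}[R_j^\star]\le\tau\,\mathbb{E}[\mu_j^\star]$; and the EE constraint C5, $\mathbb{E}[\mu_{\rm sum}^\star]\ge W\eta_{\rm EE}^{\rm req}\,\mathbb{E}[p_{\rm sum}^\star]$.

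Second, I would substitute this benchmark into the R.H.S. of (\ref{driftBound}). Because the benchmark is independent of $\chi(t)$, each conditional expectation factorizes and the current backlogs pull out as nonnegative multipliers. Grouping the $H_m,H_j$ contributions leaves the factor $\mathbb{E}[\gamma^\star]-\mathbb{E}[R^\star]\le 0$; grouping the $Q_m,Q_j$ contributions leaves $\mathbb{E}[R^\star]-\tau\,\mathbb{E}[\mu^\star]\le 0$; and the $Z$ contribution leaves $W\eta_{\rm EE}^{\rm req}\mathbb{E}[p_{\rm sum}^\star]-\mathbb{E}[\mu_{\rm sum}^\star]\le 0$. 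Since $Q_m,Q_j,H_m,H_j,Z\ge 0$, every queue-weighted term is nonpositive, so only $C-V\mathbb{E}[U(\bm{\gamma}^\star(t))]$ survives. Combined with the minimization property of \textbf{Algorithm} 1, this yields $\Delta(\chi(t))\le C-VU^*+V\epsilon$, and letting $\epsilon\to 0$ gives $\Delta(\chi(t))\le C-VU^*$.

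Finally I would take iterated expectations and telescope. Summing $\mathbb{E}[L(\chi(t+1))]-\mathbb{E}[L(\chi(t))]-V\mathbb{E}[U(\bm{\gamma}(t))]\le C-VU^*$ over $t=0,\dots,T-1$, using $L(\chi(0))=0$ and $L(\chi(T))\ge 0$, and dividing by $VT$ gives $\frac{1}{T}\sum_{t=0}^{T-1}\mathbb{E}[U(\bm{\gamma}(t))]\ge U^*-C/V$. Concavity of $U$ with Jensen's inequality then moves the time-average inside the utility, so $U(\bar{\bm{\gamma}})\ge U^*-C/V$ as $T\to\infty$; and since \textbf{Theorem} 2 enforces C10 (hence $\bar r_m\ge\bar\gamma_m$ and $\bar r_j\ge\bar\gamma_j$) while $g_H,g_R$ are non-decreasing, monotonicity gives $U(\bar{\bf{r}})\ge U(\bar{\bm{\gamma}})\ge U^*-C/V$. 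I expect the main obstacle to be rigorously establishing the stationary benchmark of the second step: this needs a Carath\'eodory/capacity-region argument producing an optimal $\omega$-only policy that attains $U^*$ under all the averaged constraints even when the raw arrival rates lie outside the network capacity region, which is exactly the regime in which the admission variables $R_m,R_j$ and the equivalence furnished by \textbf{Theorem} 1 become essential.
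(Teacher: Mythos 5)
Your proposal is correct and follows essentially the same route as the paper's Appendix E: the paper's \textbf{Lemma} 5 supplies exactly the queue-independent stationary randomized benchmark you describe (cited from Neely's book rather than re-derived), and the remaining steps --- comparing the algorithm's drift-minus-utility against that benchmark, exploiting nonpositivity of the queue-weighted terms, telescoping, and closing with Jensen's inequality and C10 --- match the paper's argument step for step. Your $\epsilon$-optimal formulation of the benchmark and the explicit division by $VT$ are marginally more careful than the paper's presentation but do not constitute a different proof.
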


\begin{proof}
Please refer to Appendix E.
\end{proof}

{To readily understand the obtained results indicated in
\textbf{Theorem} 3 and \textbf{Theorem} 4, some important
observations are further provided as follows.

\begin{itemize}
\item \textbf{Theorem} 4 shows that $U(\bar {\bf{r}}) \ge {U^*} -
C/V$. Besides, $U(\bar {\bf{r}}) \le {U^*}$. Therefore, we have
${U^*} - C/V \le U(\bar {\bf{r}}) \le {U^*}$, which indicates that
$U(\bar {\bf{r}})$ can be arbitrarily close to ${U^*}$ by setting
a sufficiently large $V$ to make $C/V$ arbitrarily small and close to 0.
This will be further verified in the following simulation section as shown in Fig. 2.

\item \textbf{Theorem} 3 and \textbf{Theorem} 4
both show the delay-utility tradeoff of $[\mathcal {O}(V),
1-\mathcal {O}(1/V)]$, which provides an important guideline to explicitly balance the delay-throughput performance on demand.
This will also be further verified in the simulation section as shown in Fig. 2 and Fig. 3.
\end{itemize}
}

\begin{remark}
{The traffic models are not specified throughout this paper, as they
do not affect the problem formulation and the corresponding
analysis. Moreover, although the packet traffic arrivals with i.i.d.
and constant arrival rates are considered in this paper, the
proposal and the corresponding theoretical analysis results still
hold for other arrivals that are independent from slot to slot, but
their arrival rates are time-varying and ergodic (possibly
non-i.i.d.). The reason is that the joint congestion control and
resource optimization policy is made only based on the size of the
queues without requiring the knowledge of traffic arrivals.
Therefore, the proposal is robust to the traffic arrival
distribution model.}
\end{remark}

\section{Simulations}

In this section, simulations will be carried out to evaluate the
performances of proposed
Joint-Congestion-Control-and-Resource-Optimization (JCCRO) scheme in
an H-CRAN.

\subsection{Parameters Setting}

The considered H-CRAN consists of 1 HPN, 4 RRHs, 12 HUEs and 10
RUEs. The HPN is located in the center of the cell area, while the
RRHs, HUEs and RUEs are uniformly distributed. There are 8 RBs and
12 RBs in the RB sets $\mathcal {K}_H$ and $\mathcal {K}_R$,
respectively. The bandwidth of each RB is $W_0 = 15$ kHz, so the
system bandwidth is $W = 300$ kHz. The slot duration is 0.01 second.
The path loss model of RRH and HPN is given by $31.5 +
40.0\log10(d)$ and $31.5 + 35.0\log10(d)$, respectively, where $d$
denotes the distance between transmitter and receiver in meters. The
fast-fading coefficients are all generated as i.i.d. Rayleigh random
variables with unit variances. The noise power is -102 dBm. For the
HPN, the drain efficiency, the maximum transmit power consumption
and the static power consumption are given by $\varphi _{\rm eff}^H
= 1$, $p^{\max}_H = 10$ W, $p_{\rm c}^H = 2$ W, respectively. For
the RRHs, the drain efficiency, the total transmit power consumption
and the static power consumption are given by $\varphi _{\rm eff}^R
= 1$, $p_i^{\max} = 3$ W, $p_{\rm c}^R = 1$ W, respectively. For
simplicity of comparison, the utility function of total average
throughput is adopted, i.e. $U({\bar {\bf{r}}}) = \alpha
{\sum\limits_{j \in {\mathcal {U}_R}} {{{\bar r_{j}}} } } + \beta
\sum\limits_{m \in \mathcal {U}_H} {{{\bar r_{m}}} }$, where the
positive utility prices for RUEs and HUEs are $\alpha = 1$ and
$\beta = 1$, respectively. It is worth noting that in this special
case, the first subproblem to derive optimal auxiliary variables is
not required. The traffic arrivals of HUEs and RUEs follow Poisson
distribution, and the mean traffic arrival rate for RUE $\lambda _j$
and HUE $\lambda _m$ is given by $\lambda _j = \lambda$ and $\lambda
_m = 0.5\lambda$, respectively. Each point of the following curves
is averaged over 5000 slots.

\subsection{The Delay-Throughput Tradeoff with Guaranteed EE}

Fig. \ref{Fig_V_Thr} and Fig. \ref{Fig_V_Delay} illustrate the
performances of throughput, delay with guaranteed EE versus
different control parameter $V$ when the mean traffic arrival rate
is $\lambda = 6$ kbits/slot. As can be seen, the achieved utility of
total average throughput increases to optimum at the speed of
$\mathcal {O}(1/V)$ as $V$ increases, which is due to the fact that
a larger $V$ implies that the control solution emphasizes more on
throughput utility. However, the utility improvement starts to
diminish with excessive increase of $V$, which can adversely
aggravate the congestion as the average delay increases linearly
with $V$. All these verify the observations indicated by
\textbf{Theorem} \ref{TheoremQ} and \textbf{Theorem} \ref{TheoremU}.
Furthermore, Fig. \ref{Fig_V_EE} plots the achieved EE verses
different control parameter $V$, which shows that the achieved EE is
always larger than or equal to $\eta_{\rm EE}^{\rm req}$.
\begin{figure}
\centering
\includegraphics[scale=0.65]{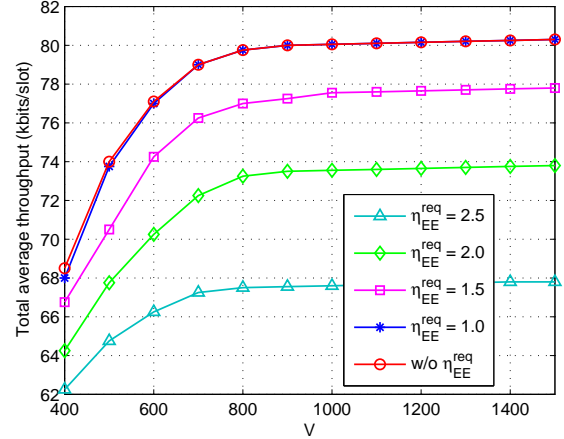}
\caption{Total average throughput versus control parameter
$V$}\label{Fig_V_Thr}
\end{figure}
\begin{figure}
\centering
\includegraphics[scale=0.65]{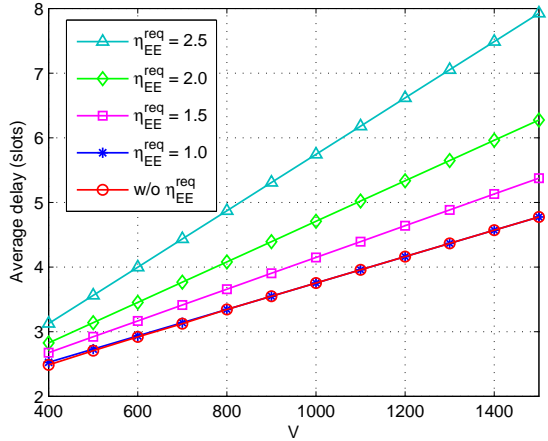}
\caption{Average delay versus control parameter
$V$}\label{Fig_V_Delay}
\end{figure}
\begin{figure}
\centering
\includegraphics[scale=0.65]{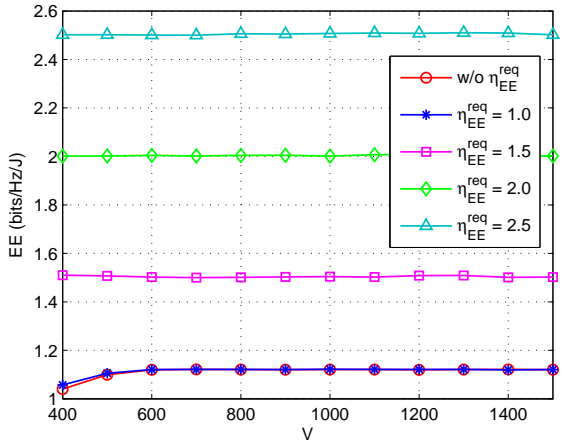}
\caption{Achieved EE versus control parameter $V$}\label{Fig_V_EE}
\end{figure}

It can be further observed from Fig. \ref{Fig_V_Thr} - Fig.
\ref{Fig_V_EE} that there exists a ceratin EE threshold $\eta_{\rm
EE}^{\rm thr}$ of the network when making a tradeoff between delay
and throughput. In our simulations, the EE threshold is $\eta_{\rm
EE}^{\rm thr} = 1.12$, which is actually the EE archived by the case
without EE requirement. Specifically, when the required EE is below
$\eta_{\rm EE}^{\rm thr}$, the actually achieved EE and
delay-throughput tradeoff is almost the same as the situation
without EE requirement. Once the required EE is above the threshold
$\eta_{\rm EE}^{\rm thr}$, the total average throughput sharply
decreases (see Fig. \ref{Fig_V_Thr}), and the average delay also
increases (see Fig. \ref{Fig_V_Delay}). This is because, to
guarantee the required EE, the network has to decrease the transmit
power, which further result in the decrease of transmit rate,
followed by the decrease of achieved throughput and the increase of
average delay. All the above observations indicate that the network
can guarantee the EE performance when maximizing the throughput. At
this point, the control parameter $V$ provides a controllable method
to flexibly balance throughput-delay performance tradeoff with
guaranteed EE. To let the H-CRAN work in a preferred state, what we
only need to do is to select an appropriate control parameters $V$.

{

\subsection{The Convergence of The Proposed Solution}

Fig. \ref{Fig_V_Iteration} shows the average number of convergence
iterations for the proposal. It can be generally observed that the
proposal under different EE requirements can converge fairly fast.
Besides, the convergence speed is influenced by some key parameters.
On the one hand, a larger $V$ means a larger average sum rate and
then a slower convergence. On the other hand, as clarified in Fig.
\ref{Fig_V_Thr}, a larger EE requirement $\eta_{\rm{EE}}^{\rm{req}}$
makes a smaller average sum rate, which means a faster convergence.

\begin{figure}
\centering
\includegraphics[scale=0.75]{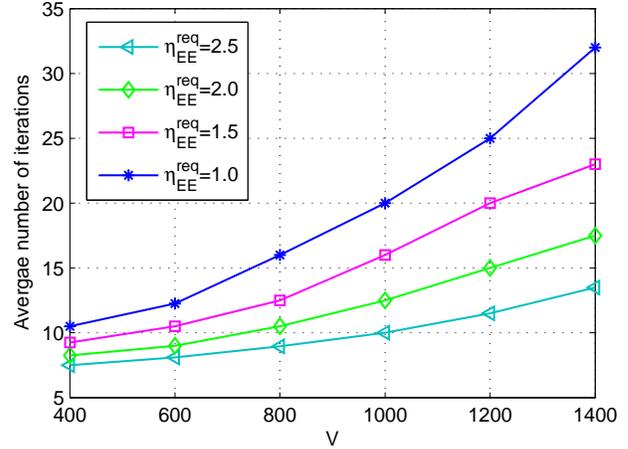}
\caption{Average number of convergence iterations versus control parameter
$V$}\label{Fig_V_Iteration}
\end{figure}

}

\subsection{The Performance Comparison under Different Traffic Arrival Rate}
To validate the efficacy of the proposed JCCRO scheme, we compare
its performances with the Maximum-Sum-Rate (MSR) scheme, which is
modeled as
\begin{equation}
\begin{array}{l}
\max \sum\limits_{j \in {\mathcal {U}_R}} {{\mu _j}(t)} + \sum\limits_{m \in {\mathcal {U}_H}} {{\mu _m}(t)} \\
~{\rm s.t.~ ~C1-C5, C8}.\\
\end{array}
\end{equation}

For the JCCRO scheme, we set the control parameter as $V = 1000$.
From Fig. \ref{Fig_Rate_Thr}, it can be observed that the total
average transmit rate of the proposed JCCRO scheme with different EE
requirement are the same and not less than the total traffic arrival
rate at first, then go to the maximum values as the mean traffic
arrival rate increases. From Fig. \ref{Fig_Rate_Delay}, it can be
observed that the average delay of the proposed JCCRO scheme always
increases with increasing mean arrival rate, that is because more
traffic arrivals means larger transmit rate, which cannot be large
enough due to the EE constraint. Again, the results of Fig.
\ref{Fig_Rate_Thr} and Fig. \ref{Fig_Rate_Delay} confirm that the
setting of EE requirement have a great effect on system performance.

As for the compared MSR scheme, on the one hand, the total average
transmit rate keeps unchanged as the traffic arrival rate varies
(see Fig. \ref{Fig_Rate_Thr}). The reason is that the MSR scheme
does not consider stochastic traffic arrivals and delivers data
under the full buffer assumption. On the other hand, the average
delay of MSR scheme is almost the same as that of JCCRO scheme at
first, but it begins to sharply increase to infinity as time elapse
when the arrival rate is large than a certain value(see. Fig.
\ref{Fig_Rate_Delay}). This is because both schemes are able to
timely transmit all the arrived data when the arrival rate is small,
while the traffic admission control component of JCCRO starts to
work to make the queues stable as the arrival rates increase.

In Fig. \ref{Fig_Rate_Power}, we further compare the total avergae
power consumption of the JCCRO scheme and the MSR scheme. We can see
that the power consumption of MSR scheme keeps unchanged as traffic
arrival rate varies and is much more than that of JCCRO scheme in
the relatively light traffic states. This is because that the MSR
scheme delivers data under the full buffer assumption and fails to
adapt to the traffic arrivals, which thus leads to a waste of energy
despite achieving the same EE performance. All the observations from
Fig. \ref{Fig_Rate_Thr} - Fig. \ref{Fig_Rate_Power} validate the
advantages of joint congestion control and resource optimization: 1)
in the relative light traffic states, more energy can be saved with
the adaptive resource optimization, and 2) in the relative heavy
traffic states, the traffic queues can be stabilized with the
traffic admission control.
\begin{figure}
\centering
\includegraphics[scale=0.65]{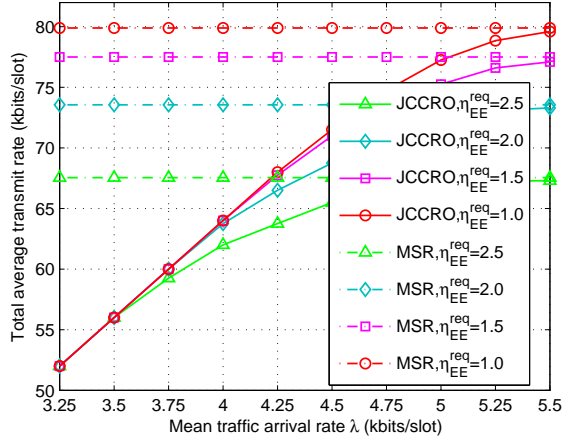}
\caption{Total average transmit rate versus mean traffic arrival
rate $\lambda$} \label{Fig_Rate_Thr}
\end{figure}
\begin{figure}
\centering
\includegraphics[scale=0.65]{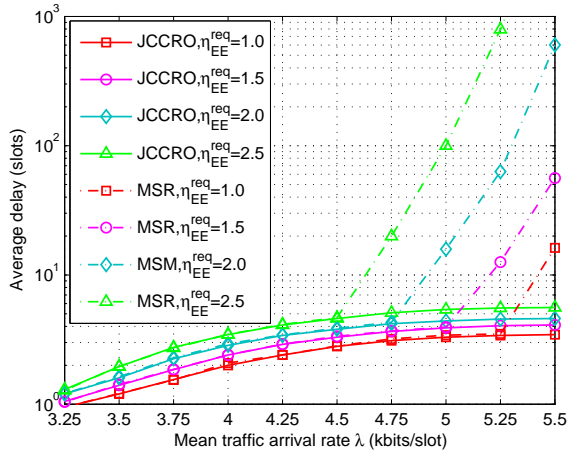}
\caption{Average delay versus mean traffic arrival rate $\lambda$}
\label{Fig_Rate_Delay}
\end{figure}
\begin{figure}
\centering
\includegraphics[scale=0.65]{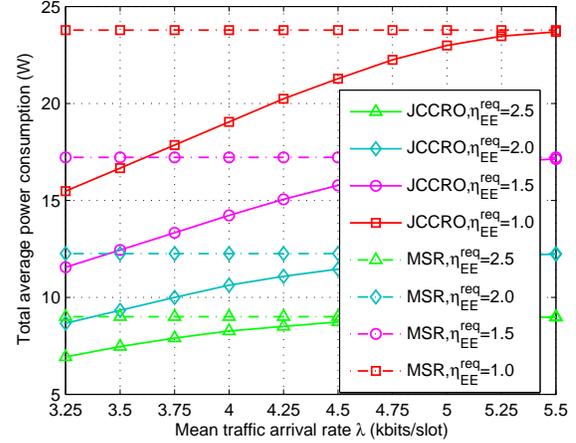}
\caption{Total average power consumption versus mean traffic arrival
rate $\lambda$} \label{Fig_Rate_Power}
\end{figure}

\section{Conclusion}

This work has focused on the stochastic optimization of
EE-guaranteed joint congestion control and resource optimization in
a downlink slotted H-CRAN. Based on the Lyapunov optimization
technique, this stochastic optimization problem has been transformed
and decomposed into three subproblems which are solved at each slot.
The continuality relaxation of binary variables and Lagrange dual
decomposition method have been exploited to solve the third
subproblem efficiently. An EE-guaranteed $[ \mathcal {O}(1/V ),
\mathcal {O}(V )]$ throughput-delay tradeoff has been finally
achieved by the proposed scheme, which has been verified by both the
mathematical analysis and numerical simulations. The simulation
results have shown the significant impact of EE requirement on the
achieved throughput-delay tradeoff and have validated the
significant advantages of joint congestion control and resource
optimization. For the future work, it would be interesting to extend
our proposed model to provide deterministic delay guarantee for
real-time traffic applications in realistic networks, e.g., mobile
video and voice.

\appendices
\section{Proof of Theorem 1}
Let $U _1^*$ and $U _2^*$ be the optimal utility of problems
(\ref{StochPro}) and (\ref{EqvStochPro}), respectively. For ease of
notation, let $\Omega _1^* $ and $\Omega _2^*$ be the optimal
solutions that achieve $U _1^*$ and $U _2^*$, respectively. Since
$U(.)$ is a non-decreasing concave function, by Jensen's inequality,
we have
\begin{equation}
U(\bar {\bm{\gamma}}) \ge \bar U({\bm{\gamma}}) = U_2^*.
\end{equation}

Since the solution $\Omega _2^*$ satisfies the constraint C10, then
we have
\begin{equation}
U(\bar {\bf{r}}) \ge U(\bar {\bm{\gamma}}).
\end{equation}

Furthermore, since $\Omega _2^*$ is feasible for the transformed
problem (\ref{EqvStochPro}), it also satisfies the constraints of
the original problem (\ref{StochPro}). Therefore, we can have
\begin{equation}
U _1^* \ge U(\bar {\bf{r}}) \ge U _2^*.
\end{equation}

Now we prove that $U _2^* \ge U _1^*$. Since $\Omega _1^*$ is an
optimal solution to the original problem, it satisfies the
constraints C1-C8, which are also the constraints of the transformed
problem. By choosing $\gamma_{m} = {\bar r_{m}^*}$ and $\gamma_{j} =
{\bar r_{j}^*}$ for all slot $t$ together with the policy $\Omega
_1^*$, we then have a feasible policy for the transformed problem
(\ref{EqvStochPro}), that is
\begin{equation}
U _2^* \ge \bar U({\bm{\gamma}}) = U(\bar {\bf{r}}) = U _1^*.
\end{equation}

Therefore, we have $U _1^* = U _2^*$ and can further conclude the
\textbf{Theorem} 1.

\section{Proof of Theorem 2}
The constraint C5 is proved firstly, and C10 can be proved
similarly. When the virtual queue $H_j(t)$ is stable, then we have
$\mathop {\lim }\limits_{T \to \infty } \frac{{\mathbb{E}[ H_{j}(T)]
}}{T} = 0$ with the probability 1. It is clear that ${H_{j}}(t + 1)
\ge {H_{j}}(t) - {R_{j}}(t) + {\gamma_{j}}(t)$. Summing this
inequality over time slots $t \in \{0,1,..., T-1 \}$ and dividing
the result by $T$ yields
\begin{equation}
\frac{{{H_{j}}(T) - {H_{j}}(0)}}{T} + \frac{1}{T}\sum\limits_{t =
0}^{T-1} {{R _{j}}(t)} \ge \frac{1}{T}\sum\limits_{t = 0}^{T-1}
{{\gamma_{j}}(t)}.
\end{equation}
By taking $T$ asymptotically closed to infinity, we finally have C5.

It is similarly concluded that we can have the inequality
$W\eta_{\rm EE}^{\rm req}\bar p_{\rm sum} \le \bar \mu _{\rm sum}$,
i.e. $\eta _{\rm EE} \ge \eta_{\rm EE}^{\rm req}$, only when the
virtual queue $Z(t)$ is stable.

\section{Proof of Lemma 1}

By leveraging the fact that ${(\max [a - b,0] + c)^2} \le {a^2} +
{b^2} + {c^2} - 2a(b - c),\forall a,b,c \ge 0$ and squaring Eq.
(\ref{dynamicsQ0m}), Eq. (\ref{dynamicsQij}), Eq. (\ref{vq0m}),
Eq. (\ref{vqij}) and Eq. (\ref{vqpower}), we have
\begin{equation}
Q_{m}^2\!(t + 1) - Q_{m}^2\!(t) \!\le\! R_{m}^2\!(t) \!+ \mu
_{m}^2\!(t)\tau ^2 \!\!- 2{Q_{m}}\!(t)({\mu _{m}}\!(t)\tau \!-
\!{R_{m}}\!(t)),
\end{equation}
\begin{equation}
Q_{j}^2(t + 1) - Q_{j}^2(t) \le R_{j}^2(t) + \mu _{j}^2(t)\tau ^2 -
2{Q_{j}}(t)({\mu _{j}}(t)\tau - {R_{j}}(t)),
\end{equation}
\begin{equation}
H_{m}^2\!(t + 1) - H_{m}^2\!(t) \le \gamma_{m}^2\!(t) + R
_{m}^2\!(t) - 2{H_{m}}\!(t)({R _{m}}\!(t) - {\gamma_{m}}\!(t)),
\end{equation}
\begin{equation}
H_{j}^2(t + 1) - H_{j}^2(t) \le \gamma_{j}^2(t) + R _{j}^2(t) -
2{H_{j}}(t)({R _{j}}(t) - {\gamma_{j}}(t)),
\end{equation}
\begin{equation}
\begin{array}{l}
Z^2(t + 1) - Z^2(t) \le (W\eta_{\rm EE}^{\rm req}{p_{\rm sum}}(t))^2
+ \mu_{\rm sum}^2(t)
-2Z(t)\\
~~~~~~~~~~~~~~~~~~~~~~~~~({\mu _{\rm sum}}(t) - W\eta_{\rm EE}^{\rm
req}{p_{\rm sum}}(t)).
\end{array}
\end{equation}

According to the definition of Lyapunov drift, we then have the
following expression by summing up the above inequalities and taking
expectation over both sides,
\begin{equation}
\begin{array}{l}
\mathbb{E}[ L(\chi (t + 1) - L(\chi (t))]
\le \\
\frac{1}{2}\!\!{\sum\limits_{j \in {\mathcal {U}_R}} \!\!\!\!{\mathbb{E}[ 2R_{j}^2(t) \!+\! \mu _{j}^2(t)\tau^2\!\! + \!\gamma _{j}^2] } } \!+ \!\frac{1}{2}\!\!\!\!\sum\limits_{m \in {\mathcal {U}_H}}\!\!\!\!\!{\mathbb{E}[ 2R_{m}^2(t)\! +\! \mu _{m}^2(t)\tau \!+\! \gamma _{m}^2] } \\
+ {\mathbb{E}[ W^2(\eta_{\rm EE}^{\rm req})^2\!p_{\rm sum}^2(t) \!\!+\!\! \mu_{\rm sum}^2(t)] }\! -\!\!\!\!\! {\sum\limits_{j \in {\mathcal {U}_R}} \!\!\!\!{\mathbb{E}[ {Q_{j}}(t)({\mu _{j}}(t)\tau \!\!- \!{R_{j}}(t))] } } \\
-\!\!\!\! \sum\limits_{m \in {\mathcal {U}_H}} \!\!\!{\mathbb{E}[ {Q_{m}}(t)({\mu _{m}}(t) \!-\! {R_{m}}(t))] }\! -\!\!\! {\sum\limits_{j \in {\mathcal {U}_R}}\!\!\! {\mathbb{E}[ {H_{j}}(t)({R_{j}}(t) \!-\! {\gamma _{j}}(t))] } } \\
-\!\!\!\!\! \sum\limits_{m \in {\mathcal {U}_H}} \!\!\!\!\!{\mathbb{E}[ {H_{m}}\!(t)({R_{m}}\!(t) \!-\!\! {\gamma _{m}}\!(t))] } \!-\! {\mathbb{E}[\! {Z}\!(t)({\mu_{\rm sum}\!(t)} \!\!-\!\! W\!\!\eta_{\rm EE}^{\rm req}\!{p_{\rm sum}}\!(t))] }, \\
\end{array}\label{approv2}
\end{equation}

Finally, the upper bound of drift-minus-utility expression can be
obtained as Eq. (\ref{drift}) by subtracting the expression
$V\mathbb{E}\{U({\bm{\gamma}})\}$ from the both sides of Eq.
(\ref{approv2}).

\section{Proof of Theorem 3}
The bounds of traffic queues for RUEs are proved firstly, and that
for MUEs can be proved similarly. Suppose that the following
inequality holds at slot $t$,
\begin{equation}
{H_{j}}(t) \le V\alpha {\phi _R} + {A_{j}^{\max}},\label{Hjbound}
\end{equation}

If $H_{j}(t) \le V\alpha {\phi _R}$, then it is easy to get
${H_{j}}(t) \le V\alpha {\phi _R} + {A_{j}^{\max}}$ according to the
admission constraint $R_{j}(t) \le A_{j}^{\max}$. Else if $H_{j}(t)
\ge V\alpha {\phi _R}$, since the utility function $g_R(.)$ is a
non-decreasing concave function and ${\phi _R}$ is the largest
right-derivative of $g_R(.)$, the following inequality can be easily
established,
\begin{equation}
\begin{array}{l}
V\alpha {g_R}({\gamma _{j}}(t)) - {H_{j}}(t){\gamma _{j}}(t) \le
V\alpha {g_R}(0)\\ + (V\alpha {\phi _R} - {H_{j}(t)}){\gamma
_{j}(t)} \le V\alpha {g_R}(0).
\end{array}
\end{equation}
which follows that when $H_{j}(t) \ge V\alpha {\phi _R}$, the
auxiliary variables decision in (\ref{optimalauxij}) forces $\gamma
_{j}$ to be 0. Therefore, inequality (\ref{Hjbound}) also holds at
slot $t+1$,
\begin{equation}
{H_{j}}(t+1) \le {H_{j}}(t) \le V\alpha {\phi _R} + {A_{j}^{\max }}.
\end{equation}

With above bound of virtual queue, the bound of traffic queue is
proved next. If $Q_{j}(t) \le H_{j}(t)$, according to the admission
control policy in (\ref{optimaladmissionij}), we have
\begin{equation}
\begin{array}{l}
{Q_{j}}(t + 1) = {Q_{j}}(t) + {R_{j}}(t) \le {Q_{j}}(t) +
{A_{j}^{\max}}\\ \le {H_{j}}(t) + {A_{j}^{\max}} = V\alpha {\phi _R}
+ 2{A_{j}^{\max}}.
\end{array}
\end{equation}

\section{Proof of Theorem 4}

To prove the bound of utility performance, the following lemma is
required.

\begin{lemma}
For arbitrary arrival rates, there exists a randomized stationary
control policy $\pi$ for H-CRAN that chooses feasible control
decisions independent of current traffic queues and virtual queues,
which yields the following steady state values:
\begin{equation}
\gamma _{m}^\pi(t) = r_{m}^*, \gamma _{j}^\pi(t) =
r_{j}^*,\label{randompolicy1}
\end{equation}
\begin{equation}
\mathbb{E}[R_{m}^\pi(t)] = r_{m}^*, \mathbb{E}[R_{j}^\pi(t)] =
r_{j}^*,\label{randompolicy1}
\end{equation}
\begin{equation}
\mathbb{E}[\mu_{m}^\pi(t)\tau] \ge \mathbb{E}[R_{m}^\pi(t)],
\mathbb{E}[\mu_{j}^\pi(t)\tau] \ge
\mathbb{E}[R_{j}^\pi(t)],\label{randompolicy2}
\end{equation}
\begin{equation}
\mathbb{E}[\mu _{\rm sum}^\pi(t)] \ge W\eta_{\rm EE}^{\rm
req}\mathbb{E}[{p_{\rm sum}^\pi}(t)].\label{randompolicy2}
\end{equation}
\label{lemma2}
\end{lemma}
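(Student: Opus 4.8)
The plan is to establish this lemma by appealing to the standard theory of optimal stationary randomized policies for i.i.d.\ stochastic network optimization, as developed in the Lyapunov optimization framework \cite{neelyinfocom}. The essential observation is that under \textbf{Assumption} 1 the arrival processes $A_m(t),A_j(t)$ are i.i.d.\ over slots, and the CSIs $g_{ijk}(t),g_{imk}(t),g_{ml}(t)$ are likewise i.i.d.\ over a finite state space. Hence the complete ``random state'' of the network at slot $t$ is an i.i.d.\ process whose statistics alone determine the set of achievable long-term average performance vectors, and in particular admit an optimal policy that reacts only to the current channel realization rather than to the queue backlogs.

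First I would fix the target operating point: let $r_m^*,r_j^*$ denote the optimal average throughputs delivered by the optimal solution $\Omega_1^*$ to the original problem (\ref{StochPro}), which by \textbf{Theorem} 1 coincide with those of the optimal solution to (\ref{EqvStochPro}). Because the performance vector — collecting the admitted traffic $R_m,R_j$, the service rates $\mu_m,\mu_j$, the aggregate rate $\mu_{\rm sum}$ and the aggregate power $p_{\rm sum}$ — is finite-dimensional and the per-slot outcomes are bounded, the set of per-slot expectation vectors attainable by feasible actions (obeying C1--C4, C7, C8) and averaged over the random-state distribution is a bounded convex set. Carath\'eodory's theorem then guarantees that the optimal operating point lying in this set is realized by randomizing, in each observed state, over a finite collection of feasible pure actions with state-dependent probabilities. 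This randomized rule, which ignores the queues entirely, defines the desired stationary policy $\pi$.

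Next I would verify the four asserted relations directly from the construction. Setting the auxiliary variables to the constants $\gamma_m^\pi(t)=r_m^*$ and $\gamma_j^\pi(t)=r_j^*$ is feasible since $r_m^*\le A_m^{\max}$ and $r_j^*\le A_j^{\max}$ by C9, which yields (\ref{randompolicy1}). The identities $\mathbb{E}[R_m^\pi(t)]=r_m^*$ and $\mathbb{E}[R_j^\pi(t)]=r_j^*$ are precisely the average-throughput equalities under a stationary i.i.d.\ policy. The service-rate inequalities $\mathbb{E}[\mu_m^\pi(t)\tau]\ge\mathbb{E}[R_m^\pi(t)]$ and $\mathbb{E}[\mu_j^\pi(t)\tau]\ge\mathbb{E}[R_j^\pi(t)]$ follow because $\Omega_1^*$ keeps every traffic queue strongly stable (C6), which in the stationary regime forces the mean service rate to dominate the mean admission rate; and the EE inequality $\mathbb{E}[\mu_{\rm sum}^\pi(t)]\ge W\eta_{\rm EE}^{\rm req}\,\mathbb{E}[p_{\rm sum}^\pi(t)]$ is the one-slot translation of the required EE constraint C5 (equivalently $\eta_{\rm EE}\ge\eta_{\rm EE}^{\rm req}$, as established in the proof of \textbf{Theorem} 2) that $\Omega_1^*$ satisfies.

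The main obstacle, and the only genuinely technical point, is justifying that the optimal long-term average performance achievable by \emph{arbitrary} history-dependent policies can be matched by a \emph{stationary randomized} policy of the stated form; this is exactly where the boundedness of per-slot outcomes, the finiteness of the CSI state space, and the Carath\'eodory/convex-hull argument are indispensable. The remaining verifications are routine bookkeeping on one-slot expectations. Since this existence result is standard for i.i.d.\ stochastic network problems, I would invoke it from \cite{neelyinfocom} rather than reprove the measurable-selection details.
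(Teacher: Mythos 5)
Your proposal is correct and follows essentially the same route as the paper: the paper simply invokes the standard existence theorem for optimal stationary randomized policies in i.i.d.\ stochastic network optimization (citing Neely's monograph and omitting details), and your sketch --- i.i.d.\ random states, bounded per-slot outcomes, the convex hull of achievable one-slot expectation vectors, Carath\'eodory's theorem, and the one-slot translation of constraints C5/C6 at the operating point $\Omega_1^*$ --- is precisely the standard argument behind that cited result. No substantive discrepancy to report.
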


As the similar proof of \textbf{Lemma} \ref{lemma2} can be found
in{\cite{neelybook}}, the details are omitted to avoid redundancy.
Since the proposed solution is obtained by choosing control
variables that can minimize the R.H.S. of Eq. (\ref{drift}) among
all feasible decisions (including the randomized control decision
$\pi$ in \textbf{Lemma} \ref{lemma2}) at each slot, then we have
\begin{equation}
\begin{array}{l}
\Delta (\chi (t)) \le C - \mathbb{E}\left[{ {\sum\limits_{j \in
{\mathcal {U}_R}} {( V\alpha {g_R}({\gamma _{j}^{\pi}(t)}) -
{H_{j}}(t){\gamma _{j}^\pi}(t)) } }}\right.\\
~~~~~~~~~~\left.{+ \sum\limits_{m \in {\mathcal {U}_H}} {( V\beta {g_H}({\gamma _{m}^\pi(t)})- {H_{m}}(t){\gamma _{m}^\pi(t)}} ) |\chi(t)}\right]\\
~~~~~~~~~~- \mathbb{E}\left[{ \sum\limits_{m \in {\mathcal {U}_H}}(
{{H_{m}}(t) - {Q_{m}}(t)) {R_{m}^\pi}(t) }}\right.\\
~~~~~~~~~~\left.{+ {\sum\limits_{j \in {\mathcal {U}_R}} {( {H_{j}}(t) - {Q_{j}}(t)] {R_{j}^\pi}(t)) \chi (t) } }}\right] \\
~~~~~~~~~~- \mathbb{E}\left[{ \sum\limits_{m \in {\mathcal {U}_H}} {{Q_{m}}(t)\mu_{m}^\pi(t)\tau }+ {\sum\limits_{j \in {\mathcal {U}_R}} {{Q_{j}}(t){\mu _{j}^\pi}(t)\tau} }}\right.\\
~~~~~~~~~~\left.{ + {{Z}(t)( \mu _{\rm sum}^\pi(t) - W\eta_{\rm EE}^{\rm req}{p_{\rm sum}^\pi}(t)) } |\chi (t)}\right], \\
\end{array}
\end{equation}

Since the randomized stationary policy is independent of $\chi (t)$,
we have
\begin{equation}
\begin{array}{l}
\Delta (\chi (t)) \le C - {\sum\limits_{j \in {\mathcal {U}_R}}[ \mathbb{E}[{ V\alpha {g_R}({\gamma _{j}^{\pi}(t)})] - {H_{j}}(t)\mathbb{E}[{\gamma _{j}^\pi}(t)] }}] \\
~~~~~~~~~+ \!\!\!\!\!\sum\limits_{m \in {\mathcal {U}_H}}\!\!\!\![\mathbb{E} {[ V\beta {g_H}({\gamma _{m}^\pi(t)})]\!\! - \!\!{H_{m}}(t)\mathbb{E}[{\gamma _{m}^\pi(t)}} ] ]\!- \!\!\!\!\!\!\sum\limits_{m \in {\mathcal {U}_H}}\!\!\!\! ( {H_{m}}\!(t)\\
~~~~~~~~~- {Q_{m}}(t)) \mathbb{E}[{R_{m}^\pi}(t) ]- \!\!\!{\sum\limits_{j \in {\mathcal {U}_R}} \!\!\!{( {H_{j}}(t) - {Q_{j}}(t)) \mathbb{E}[ {R_{j}^\pi}(t)] } } \\
~~~~~~~~~- \!\!\!\!\sum\limits_{m \in {\mathcal {U}_H}} {{Q_{m}}(t)\mathbb{E}[\mu_{m}^\pi(t)\tau] }- {\sum\limits_{j \in {\mathcal {U}_R}} {{Q_{j}}(t)\mathbb{E}[{\mu _{j}^\pi}(t)\tau]} }\\
~~~~~~~~~- {{Z}(t)(\mathbb{E}[\mu_{\rm sum}^\pi(t)] - W\eta_{\rm EE}^{\rm req}\mathbb{E}[{p_{\rm sum}^\pi}(t)] } ), \\
\end{array}
\label{randomdrift}
\end{equation}

By plugging (\ref{randompolicy1})-(\ref{randompolicy2}) into the
R.H.S. of (\ref{randomdrift}), we have
\begin{equation}
\mathbb{E}[ L(\chi (t + 1)) - L(\chi (t))] - V\mathbb{E}
[U({\bm{\gamma}}(t))] \le C - V{U^*}.
\end{equation}

Then by summing the above over slot $t \in \{0,1,...T-1\}$ and
dividing the result by $T$, we have
\begin{equation}
\frac{{\mathbb{E}[ L(\chi (t + 1))] - \mathbb{E}[ L(\chi (0))]
}}{T} - \frac{1}{T}\sum\limits_{t = 0}^{T - 1} {\mathbb{E}[
U({\bm{\gamma}}(t))}] \le C - V{U^*}.
\end{equation}

Considering the fact that $L(\chi (t + 1)) \ge 0$ and $L(\chi (0)) =
0$, we then have
\begin{equation}
\mathop {\lim }\limits_{T \to \infty } \frac{1}{T}\sum\limits_{t =
0}^{T - 1} {\mathbb{E}[U({\bm{\gamma}}(t))] } \ge {U^*} - C/V.
\end{equation}

Furthermore, since the utility function is a non-decreasing concave
function, according to Jensen's inequality, we finally have
\begin{equation}
U(\bar{\bf{r}}) \ge U(\bar{\bm{\gamma}}) \ge
\frac{1}{T}\sum\limits_{t = 0}^{T - 1} {\mathbb{E}[
U({\bm{\gamma}}(t)) ]} \ge {U^*} - C/V.
\end{equation}

\end{document}